\tikzstyle{block} = [draw=black, thick, text width=2cm, minimum height=1cm, align=center]  
\tikzstyle{arrow} = [thick,->,>=stealth]
\def\Z{\mathbb{Z}}
\def\R{\mathbb{R}}
\def\F{\mathcal{F}}
\newcommand{\C}{\mathcal{C}}
\newcommand{\N}{\mathbb{N}}
\newcommand{\A}{\mathcal{A}}
\newcounter{thm}
\newcounter{ex}
\newcounter{re}
\newtheorem{Theorem}[thm]{Theorem}
\newtheorem{Lemma}[thm]{Lemma}
\newtheorem{Proposition}[thm]{Proposition}
\newtheorem{remark}[thm]{Remark}
\newtheorem{Definition}[thm]{Definition}
\title[KP hierarchy, pseudo-differential operators and Yang-Mills]{Kadomtsev-Petviashvili hierarchies with 
non-formal pseudo-differential operators, non-formal solutions, and a Yang-Mills--like formulation}
\author[J.-P. Magnot and E.G. Reyes]{Jean-Pierre Magnot$^1$ and Enrique G. Reyes$^2$}
\address{\small $^1$: {LAREMA, Universit\'e d’Angers, 2 Bd Lavoisier, 
49045 Angers cedex 1, France,  and  Lyc\'ee Jeanne d'Arc, 40 avenue de Grande Bretagne, 63000 Clermont-Ferrand, 
France}, and Lepage Research Institute.
17 novembra 1, 081 16 Presov,
Slovakia}
\email{\small magnot@math.univ-angers.fr; jean-pierr.magnot@ac-clermont.fr}
\address{\small $^2$:
	Departamento de Matem\'{a}tica y Ciencia de la Computaci\'{o}n,
	Universidad de Santiago de Chile, Casilla 307 Correo 2, Santiago,
	Chile. }\email{\small enrique.reyes@usach.cl;
	e\_g\_reyes@yahoo.ca}
\begin{document}

\begin{abstract}
We start from the classical Kadomtsev-Petviashvili hierarchy posed on formal pseudo-differential 
operators, and we produce two hierarchies of non-linear equations posed on non-formal pseudo-differential 
operators lying in the Kontsevich and Vishik's odd class, one of them with values in formal pseudo-differential 
operators. We prove that the corresponding Zakharov-Shabat equations hold in this context, and we express one of 
our hierarchies as the minimization of a class of Yang-Mills action functionals on a space of 
pseudo-differential connections whose curvature takes values in the Dixmier ideal. We finish by comparing our  
Kadomtsev-Petviashvili hierarchies in terms of the solutions that they produce to the KP-II 
equation: existence, uniqueness and formality.  
\end{abstract}
\maketitle

\textit{Keywords:} Kadomtsev-Petviashvili hierarchy, Birkhoff-Mulase factorization, Hilbert-Schmidt operators, 
odd class pseudo-differential operators, Yang-Mills action functional.

\smallskip

\smallskip

\textit{MSC(2010):} 35Q51; 37K10; 37K25; 37K30; 58J40 Secondary: 58B25; 47N20
\section*{Introduction}
For more than thirty years, solutions to the Kadomtsev-Petviashvili (KP) hierarchy have played a central role in 
geometric approaches to Mathematical Physics, not only in the theory of integrable systems (for example due to 
its links with equations such as the Burgers, Korteweg-de Vries and Boussinesq equations, see {\em e.g.,} 
\cite{Kodama}), but also for its connections with algebraic curves (see for instance \cite{M2}) and with the 
geometry of the Pressley-Segal infinite-dimensional grassmannian. Considering this last issue, the connection of 
the KP hierarchy --formulated with formal pseudodifferntial operators-- with groups involving (non-formal) 
Hilbert-Schmidt operators on a separable  Hilbert space, see \cite{Mick,PS}, remains a exciting and amazing 
procedure even after all these years. 

In our previous paper \cite{MR2018}, we solved the Cauchy problem and proved well-posedness for 
a \textit{deformed} version of a KP hierarchy constructed (for the first time, to the best our knowledge) 
with non-formal odd class pseudodifferential operators, and we exhibited its relationship with vertex-like 
formulations of a class of non-formal Fourier integral operators which lie in the Pressley-Segal unitary group 
$U_{res}.$ In the present work, we address the problem of the formulation of a standard KP hierarchy in terms of 
non-formal operators \emph{and} we aim at producing solutions that are non-formal, that is, they are not formal 
series but {\em true} smooth functions in the independent variables. We say that solutions (and henceforth, 
equations) are 
\emph{delinearized}. This leads to the definition of a non-linearized KP hierarchy with values in 
pseudo-differential operators of non-formal type. If they exist, the operator-valued solutions are proven to 
fulfill non-formal versions of the Zakharov-Shabat equations. Moreover, since operators are non-formal, this 
enables us to produce a Yang-Mills type equation which is equivalent to the Zakharov-Shabat equations. The 
classical resolution for the KP hierarchy in terms of r-matrices (the method of solutions that we used in
\cite{MR2016,MR2018}, for instance) does not hold in our context. Therefore, we can reduce there equations in two 
different ways: 
\begin{enumerate}
	\item by taking a linearized form of each object
	\item by taking the formal part of operators.  
\end{enumerate}
In the linearization process, we recover our capacity to solve the Cauchy problem by means of r-matrix methods,
 and we prove well-posedness for the \textit{non-deformed} version 
of the KP hierarchy, expressed with non-formal odd class pseudodifferential operators. 
In taking the formal part of operators, the delinearized version of the Kadontsev hierarchy does not seem to 
produce any advance. But it is an intermediate step to consider delinearized solutions to the KP-II equations.
Gathering (1) and (2), we recover the classical KP hierarchy. 

This paper is organized as follows: Section \ref{s:1} is dedicated to necessary preliminaries on non-formal 
pseudo-differential operators, focusing on the odd class of operators first described by Kontsevich and Vishik in 
\cite{KV1}. Then, in Section \ref{s:2} the announced delinearized Kadomtsev-Petviashvili hierarchy is studied,
and  the Zakharov-Shabat equations and their Yang-Mills formulation is given. In sections \ref{s:3} and \ref{s:4}, 
we prove a Birkhoff-Mulase factorization and well-posedness in the context of a KP hierarchy formulated on formal 
series in time, but for non-formal operators. In the last section, we discuss the relationship of the delinearized 
KP hierarchy with the solutions of the standard KP-II equation. 
 
\section{On non-formal pseudodifferential operators} \label{s:1}
We introduce the groups and algebras
of (non-formal!) pseudo-differential operators that we need in order to set up a meaningful KP hierarchy.  
Their choice is a delicate matter, since we aim at proving a smooth factorization result for an appropriate
group to be introduced in Section 2, what we call a {\em non-formal Birkhoff-Mulase decomposition} after \cite{PS} 
and \cite{M1,M3}. In this section $E$ is $E = S^1 \times V$ in which $V$ is a finite-dimensional complex vector 
space. The following definition  appears in \cite{MR2018}; it is taken from \cite[Section 2.1]{BGV}.

\begin{Definition} 
	The graded algebra of differential operators acting on the space of smooth sections $C^\infty(S^1,V)$ is the 
	algebra $DO(S^1,V)$ generated by:
	
	$\bullet$ Elements of $End(E),$ the group of smooth maps $E \rightarrow E$ leaving each fibre globally 
	invariant, 
	and restricting to linear maps on each fibre. This group acts on sections of $E$ via (matrix) multiplication;
	
	$\bullet$ The differentiation operators
	$$\nabla_X : g \in C^\infty(S^1,E) \mapsto \nabla_X g$$ where $\nabla$ 
	is a connection on $E$ and $X$ is a vector field on $S^1$.
\end{Definition}

Multiplication operators are operators of order $0$; differentiation operators and vector fields are operators 
of order 1. 

The algebra $DO(S^1,V)$ is graded by order: 
we denote by $DO^k(S^1,V)$, $k \geq 0$, the space of differential operators of order less or equal than $k$.
$DO(S^1,V)$ is a subalgebra of the algebra of classical pseudo-differential operators
$Cl(S^1,V),$ an algebra which is filtered by order and that contains, for example, the square root of the 
Laplacian, its inverse, and all trace-class operators on $L^2(S^1,V).$ Basics facts on pseudo-differential 
operators can be found in \cite{Gil}. 
Following \cite{Ma2016}, we assume henceforth that $S^1$ is equipped with charts such that the changes 
of coordinates are translations. We list some of the spaces of pseudo-differential operators that we will use in 
this work.


\begin{itemize}
\item $ PDO (S^1,V) $  
(resp.  $ PDO^o (S^1,V)$, resp. $Cl(S^1,V)$) is the space of pseudo-differential operators 
(resp. pseudo-differential operators of order $o$; resp. classical pseudo-differential operators) acting on 
$C^\infty(S^1,V).$
\item $Cl^o(S^1,V)= PDO^o(S^1,V) \cap Cl(S^1,V)$ is the space of classical pseudo-differential 
operators 
of order $o$. 
\item $Cl^{0,\ast}(S^1,V)$ is the group of units of $Cl^0(S^1,V)$. 
\end{itemize}


A topology on spaces of classical pseudo-differential operators has been described by Kontsevich and Vishik in 
\cite{KV1}; see also \cite{CDMP,PayBook,Scott} for other descriptions.
We use all along this work the Kontsevich-Vishik topology: for fixed $o\in \Z,$ the Kontsevich-Vishik topology  
defines a Fr\'echet topology on $Cl^o(S^1,V).$
We also recall that the set of smoothing operators (see e.g.
\cite{Gil,Scott}) is the two-sided ideal of $PDO(S^1,E)$ defined by 
$$ PDO^{-\infty}(S^1,V) = \bigcap_{o \in \Z} PDO^o(S^1,V) \; .$$
Therefore, the following quotients are well defined:
$$\mathcal{F}PDO(S^1,V) = PDO(S^1,V) / PDO^{-\infty}(S^1,V)\; ,$$ 
$$\F Cl(S^1,E) = Cl(S^1,E) /  PDO^{-\infty}(S^1,E)\; ,$$
$$ \quad \F Cl^o(S^1,V) = Cl^o(S^1,V)  / PDO^{-\infty}(S^1,V)\; .$$
\noindent {The script font $\F$ stands for} {\it  formal pseudo-differential operators}. 
The quotient algebra $\mathcal{F}PDO(S^1,E)$ can be identified with the set of formal symbols, see \cite{BK}, 
and the identification is a morphism of $\mathbb{C}$-algebras for the usual multiplication on formal symbols 
(see e.g. \cite{Gil}). Finally, we denote by ${\mathcal F}Cl^{0,*}(S^1,V)$ the group of units of the algebra 
${\mathcal F}Cl^{0}(S^1,V)$, and
by $Diff_+(S^1)$ the group of orientation preserving diffeomorphisms of $S^1$. 

\begin{Theorem} 
The groups $Cl^{0,*}(S^1,V)$, $Diff_+(S^1)$, and $\mathcal{F}Cl^{0,*}(S^1,V)$ are regular Fr\'echet Lie groups.
\end{Theorem}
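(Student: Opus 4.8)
The plan is to prove the three claims by two essentially different mechanisms: a structure theorem for the unit group of a well-behaved topological algebra, applied to $Cl^{0,*}(S^1,V)$ and $\F Cl^{0,*}(S^1,V)$; and the classical theory of diffeomorphism groups of compact manifolds, applied to $Diff_+(S^1)$.

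For $Diff_+(S^1)$ I would simply invoke the well-known results of Omori, Milnor and Kriegl--Michor: the group of smooth diffeomorphisms of a compact manifold is a regular Fr\'echet Lie group modelled on its Fr\'echet space of smooth vector fields, with smooth composition and inversion. Since $Diff_+(S^1)$ is the open identity component, it inherits this structure, with Lie algebra the space of smooth vector fields on $S^1$.

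For the two unit groups the strategy rests on establishing that $Cl^0(S^1,V)$ and $\F Cl^0(S^1,V)$ are \emph{continuous inverse algebras}, that is, Fr\'echet algebras --- which they are, for the Kontsevich--Vishik topology --- in which the set of invertible elements is open and inversion is continuous. Once this is granted, the conclusion is automatic from Gl\"ockner's theorem on unit groups of continuous inverse algebras: the group of units is an open subset of the algebra, hence a Fr\'echet manifold whose Lie algebra is the whole algebra; multiplication and inversion are smooth as (co)restrictions of the algebra operations; and regularity holds because the logarithmic-derivative equation $\gamma'(t) = \gamma(t)\,\xi(t)$, $\gamma(0) = \mathrm{Id}$, for a smooth curve $\xi$ is a linear equation whose evolution operator exists and depends smoothly on $\xi$ in any Mackey-complete continuous inverse algebra, Fr\'echet algebras being complete.

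The remaining --- and main --- task is to verify the continuous-inverse-algebra property. For $\F Cl^0(S^1,V)$ this is algebraic and transparent through the identification with formal symbols: an order-$0$ formal symbol is invertible precisely when its leading term is pointwise invertible, after which the lower-order homogeneous components of the inverse are determined recursively by the formal symbol product, an operation continuous for the Fr\'echet topology. For $Cl^0(S^1,V)$ the delicate point is to show that an operator $A \in Cl^0(S^1,V)$ which is invertible on $L^2(S^1,V)$ has its inverse again in $Cl^0(S^1,V)$, and that this condition is open with $A \mapsto A^{-1}$ continuous for the Kontsevich--Vishik topology. I would obtain this from the elliptic parametrix construction: $L^2$-invertibility forces ellipticity, a parametrix $B \in Cl^0(S^1,V)$ yields $AB = \mathrm{Id} + S$ with $S \in PDO^{-\infty}(S^1,V)$, and then $A^{-1} = B(\mathrm{Id}+S)^{-1}$ lies in $Cl^0(S^1,V)$ because $PDO^{-\infty}(S^1,V)$ is a two-sided ideal and $(\mathrm{Id}+S)^{-1} - \mathrm{Id}$ is smoothing. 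Openness and continuity of inversion then follow from the openness of ellipticity at the level of principal symbols together with the openness of $L^2$-invertibility, through the continuity of the inclusion $Cl^0(S^1,V) \hookrightarrow \mathcal{B}(L^2(S^1,V))$; verifying that the Kontsevich--Vishik topology genuinely controls these estimates is where I expect the real work to lie.
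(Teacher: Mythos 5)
Your proposal is correct and follows essentially the same route as the paper, whose proof consists of citations: Gl\"ockner's theorem on unit groups of (continuous inverse) Fr\'echet algebras (the paper's reference [Gl2002], invoked via [Ma2006, MR2018]) for $Cl^{0,*}(S^1,V)$ and $\mathcal{F}Cl^{0,*}(S^1,V)$, and Eells/Omori for the openness of $Diff_+(S^1)$ in $C^\infty(S^1,S^1)$ and its regularity. The only difference is that you sketch the verification of the continuous-inverse-algebra property (formal-symbol recursion and the elliptic parametrix argument), work that the paper delegates entirely to the cited references.
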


This theorem is already well-known: it is noticed in \cite{Ma2006,MR2018}, by applying the results of 
\cite{Gl2002}, that the group $Cl^{0,*}(S^1,V)$ (resp.$\F Cl^{0,*}(S^1,V)$) is open in $Cl^0(S^1,V)$ 
(resp. $\F Cl^{0}(S^1,V)$) and also that it is a regular {\em Fr\'echet} Lie group. Also, it follows from 
\cite{Ee,Om} that $Diff_+{S^1}$ is open in the Fr\'echet manifold $C^\infty(S^1,S^1)$. This fact makes it a 
Fr\'echet manifold and, following \cite{Om}, a regular Fr\'echet Lie group.

\begin{Definition} \label{d7}
	A classical pseudo-differential operator $A$ on $S^1$ is called odd class
	if and only if for all $n \in \Z$ and all $(x,\xi) \in T^*S^1$ we have:
	$$ \sigma_n(A) (x,-\xi) = (-1)^n  \sigma_n(A) (x,\xi)\; ,$$
	in which $\sigma_n$ is the symbol of $A$.
\end{Definition}

This particular class of pseudo-differential operators was introduced by Kontsevich and Vishik in 
\cite{KV1,KV2}. Odd class operators are also called ``even-even class'' operators, see \cite{Scott}. We choose 
to follow the terminology of the first two references. Hereafter, the notation $Cl_{odd}$ will refer to odd 
class classical pseudo-differential operators.  The following proposition summarizes Lemma 3.4 and 
Proposition 3.5 of \cite{MR2018}.

\begin{Proposition} 
	The algebra $Cl_{odd}^0(S^1,V)$ is a closed subalgebra of $Cl^0(S^1,V)$. 
	The space of units $Cl_{odd}^{0,*}(S^1,V)$ is 
	\begin{itemize}
		\item An open subset of $Cl^0(S^1,V)$ and,
		\item A regular Fr\'echet Lie group.
	\end{itemize} 
\end{Proposition}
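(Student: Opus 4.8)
The plan is to prove the three assertions — closed subalgebra, openness of the unit group, and the regular Fr\'echet Lie group structure — in that order, letting the asymptotic product formula for symbols do the algebraic work and the previous Theorem (regularity of $Cl^{0,*}(S^1,V)$) do the analytic work. \emph{Closed subalgebra:} for each $n$ the defining condition $\sigma_n(A)(x,-\xi)=(-1)^n\sigma_n(A)(x,\xi)$ is linear in $A$, so stability under sums and scalar multiples is immediate. For composition I would invoke $\sigma(AB)\sim\sum_\alpha \frac{1}{\alpha!}\partial_\xi^\alpha\sigma(A)\,D_x^\alpha\sigma(B)$ and track parities: a term built from the homogeneous pieces $\sigma_m(A)$ and $\sigma_{m'}(B)$ contributes to order $k=m+m'-|\alpha|$, and under $\xi\mapsto-\xi$ it acquires the sign $(-1)^{|\alpha|}(-1)^m(-1)^{m'}=(-1)^{k}$, the factor $(-1)^{|\alpha|}$ coming from differentiating $\sigma_m(A)$ in $\xi$. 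Summing the contributions of fixed order $k$ then shows $\sigma_k(AB)$ has parity $(-1)^k$, so $AB$ is odd class. Closedness is soft: each $A\mapsto\sigma_n(A)$ is continuous for the Kontsevich--Vishik topology, so each parity identity cuts out a closed subspace and $Cl_{odd}^0(S^1,V)$ is their intersection; being closed in the Fr\'echet algebra $Cl^0(S^1,V)$, it is itself Fr\'echet.

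\emph{Open group of units.} The first point is that inversion preserves the odd class. If $A\in Cl_{odd}^0$ is invertible in $Cl^0$, solving $A\,A^{-1}=\mathrm{Id}$ order by order with the product formula determines the homogeneous components of the formal symbol of $A^{-1}$ by a recursion that, by the parity bookkeeping above, remains in the odd class; the residual ambiguity is a smoothing operator, and smoothing operators have vanishing homogeneous symbol components, hence are trivially odd class. Thus $A^{-1}$ is odd class and $Cl_{odd}^{0,*}(S^1,V)=Cl_{odd}^0(S^1,V)\cap Cl^{0,*}(S^1,V)$. Since $Cl^{0,*}(S^1,V)$ is open in $Cl^0(S^1,V)$ by the previous Theorem, this intersection is open in $Cl_{odd}^0(S^1,V)$ (a nonempty subset of the proper closed subalgebra cannot be open in all of $Cl^0$, so openness is understood relative to $Cl_{odd}^0$). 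As an open subset of the Fr\'echet space $Cl_{odd}^0(S^1,V)$, the unit group is a Fr\'echet manifold modeled on $Cl_{odd}^0(S^1,V)$.

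\emph{Lie group and regularity.} Multiplication and inversion on $Cl_{odd}^{0,*}$ are restrictions of the corresponding smooth maps on $Cl^{0,*}$; since they preserve the closed subspace $Cl_{odd}^0$ and the latter carries the subspace topology, their (co)restrictions are smooth, making $Cl_{odd}^{0,*}$ a Fr\'echet Lie group with Lie algebra $Cl_{odd}^0$. For regularity I would transfer it from $Cl^{0,*}$: given a smooth curve $\xi\colon[0,1]\to Cl_{odd}^0$, regularity of $Cl^{0,*}$ yields the unique solution $g$ of $g'=g\,\xi$, $g(0)=\mathrm{Id}$, in $Cl^{0,*}$, with smooth dependence on $\xi$. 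To see that $g$ stays in the odd class I would use its Picard/Volterra series, whose terms are iterated integrals of products $\xi(s_1)\cdots\xi(s_j)$: each term lies in $Cl_{odd}^0$ (a subalgebra stable under Riemann integration of $Cl_{odd}^0$-valued curves, by closedness), so the convergent sum lies in the closed subspace $Cl_{odd}^0$; combined with the inversion result above this gives $g(t)\in Cl_{odd}^{0,*}$, and uniqueness together with smooth dependence descend from $Cl^{0,*}$.

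The genuine content sits in the symbol bookkeeping for composition and inversion; once that is secured, openness and the manifold structure are formal consequences of the previous Theorem. I expect the delicate step to be the invariance of the evolution under the closed subalgebra in the regularity argument: one must verify that the solution produced by the ambient regularity genuinely remains in $Cl_{odd}^0$, and the Picard-series representation together with closedness is what makes that rigorous.
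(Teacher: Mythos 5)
Your handling of the first two assertions is sound, and it actually supplies details the paper itself omits (the paper states this proposition as a summary of Lemma 3.4 and Proposition 3.5 of \cite{MR2018}, without proof): the parity bookkeeping in the composition formula, the continuity of the maps $A\mapsto\sigma_n(A)$ for the Kontsevich--Vishik topology, and the order-by-order inversion recursion are the right ingredients, and you are correct that the openness claim can only be meant relative to $Cl_{odd}^0(S^1,V)$, since a proper closed linear subspace of a topological vector space has empty interior.

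The genuine gap is in your regularity step. From regularity of the ambient group $Cl^{0,*}(S^1,V)$ you get the solution $g$ of $g'=g\,\xi$, $g(0)=\mathrm{Id}$, and you then identify $g$ with its Picard/Volterra series in order to conclude $g(t)\in Cl_{odd}^0(S^1,V)$ by closedness. But nothing guarantees that this series converges in the Fr\'echet topology of $Cl^0(S^1,V)$: this is not a Banach algebra, the Kontsevich--Vishik seminorms are not submultiplicative (each composition in the symbol expansion costs derivatives), and the ambient regularity theorem asserts existence and smooth dependence of $g$ on $\xi$ --- it does not provide any series or product-integral representation of $g$. The possible failure of such series to converge is precisely what makes regularity a nontrivial property of Fr\'echet Lie groups, so it cannot be assumed here; closedness of $Cl_{odd}^0$ only helps once convergence of the partial sums is known. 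Two ways to repair this. (i) Do what \cite{MR2018} does: by your first two steps, $Cl_{odd}^0(S^1,V)$ is itself a Fr\'echet algebra with open unit group and continuous inversion, i.e.\ exactly the kind of algebra to which the argument behind the previous Theorem (via \cite{Gl2002}) applies, so its unit group is a regular Fr\'echet Lie group directly, with no transfer from the ambient group needed. (ii) Keep your transfer strategy but prove invariance at the symbol level: the projection $\sigma$ onto formal symbols is a continuous algebra morphism, so $\sigma(g)'=\sigma(g)\,\sigma(\xi)$ with $\sigma(g)(0)=1$; in the formal algebra this equation is solved uniquely and recursively, each homogeneous component satisfying a pointwise (hence finite-dimensional) linear inhomogeneous ODE, and the parity conditions propagate through the recursion; since oddness is a condition on the symbol alone, $g(t)\in Cl_{odd}^0(S^1,V)$ follows. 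Either repair closes the argument; as written, the Volterra step would not survive scrutiny.
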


\smallskip

By the symmetry property stated in Definition \ref{d7}, an odd class pseudo-differential operator $A$ has a 
partial symbol of non-negative order $n$ of the form
\begin{equation} \label{alfa}
\sigma_{n}(A)(x,\xi) = \gamma_n(x) (i\xi)^n \, ,
\end{equation} 
where $\gamma_n \in C^\infty(S^1,L(V))$. {\em This consequence of Definition $\ref{d7}$ allows 
us to check the following direct sum decomposition:}

\begin{Proposition}\label{SD}
	$$Cl_{odd}(S^1,V) = Cl_{odd}^{-1}(S^1,V) \oplus DO(S^1,V)\; .$$  
\end{Proposition}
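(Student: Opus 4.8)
The plan is to exhibit an explicit splitting that strips the ``differential part'' off the top of the symbol of an odd class operator. First I would record that both summands genuinely sit inside $Cl_{odd}(S^1,V)$. The inclusion $Cl_{odd}^{-1}(S^1,V)\subseteq Cl_{odd}(S^1,V)$ is immediate, and a differential operator $D\in DO(S^1,V)$ is odd class because, in the translation charts on $S^1$, its total symbol is a polynomial $\sum_n a_n(x)(i\xi)^n$ with $a_n\in C^\infty(S^1,L(V))$, whose homogeneous components $\sigma_n(D)(x,\xi)=a_n(x)(i\xi)^n$ satisfy $\sigma_n(D)(x,-\xi)=(-1)^n\sigma_n(D)(x,\xi)$, exactly the condition of Definition \ref{d7}. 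In particular every nonzero differential operator has non-negative order, a fact I will reuse for directness. I would also note that the parity condition of Definition \ref{d7} is linear in the symbol, so $Cl_{odd}(S^1,V)$ is closed under addition and subtraction.

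For surjectivity of the sum, let $A\in Cl_{odd}(S^1,V)$ have order $m$. If $m\leq -1$ there is nothing to do, since then $A\in Cl_{odd}^{-1}(S^1,V)$ and we take the differential part to be $0$. If $m\geq 0$, I would invoke the structural identity \eqref{alfa}, which is the payoff of the odd class hypothesis: each homogeneous component of non-negative order $n$ equals the monomial $\sigma_n(A)(x,\xi)=\gamma_n(x)(i\xi)^n$ with $\gamma_n\in C^\infty(S^1,L(V))$. Because these are genuine polynomials in $\xi$, the finite sum $\sum_{n=0}^{m}\gamma_n(x)(i\xi)^n$ is the total symbol of a bona fide differential operator $D\in DO(S^1,V)$. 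By construction $\sigma_n(A-D)=\sigma_n(A)-\sigma_n(D)$ vanishes for every $n\geq 0$, so $A-D$ has order $\leq -1$; as $A$ and $D$ are both odd class and the class is stable under subtraction, $A-D\in Cl_{odd}^{-1}(S^1,V)$. Hence $A=D+(A-D)$ is the desired splitting.

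Finally I would verify directness, namely $Cl_{odd}^{-1}(S^1,V)\cap DO(S^1,V)=\{0\}$: a nonzero differential operator has a top-order coefficient and therefore order $\geq 0$, so it cannot simultaneously lie in $Cl_{odd}^{-1}(S^1,V)$. This also gives uniqueness of the decomposition, since $D$ is completely determined by the non-negative homogeneous components of $A$. The one point deserving genuine care --- and essentially the only place the odd class hypothesis is used --- is the passage from ``homogeneous of non-negative integer degree'' to ``polynomial in $\xi$.'' On $S^1$ the cotangent variable $\xi$ ranges over $\R$, so an a priori positively homogeneous component of degree $n$ could take the form $c_+(x)\xi^n$ for $\xi>0$ and $c_-(x)|\xi|^n$ for $\xi<0$ with independent coefficients; it is precisely the symmetry $\sigma_n(x,-\xi)=(-1)^n\sigma_n(x,\xi)$ that forces $c_-=c_+$ and collapses the component to the single monomial recorded in \eqref{alfa}. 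I expect confirming that \eqref{alfa} really yields a differential-operator symbol, rather than a merely homogeneous one, to be the main substantive step; the rest is bookkeeping with the symbol calculus.
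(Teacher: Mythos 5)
Your proof is correct and follows exactly the route the paper intends: the paper derives the decomposition as a direct consequence of equation \eqref{alfa}, i.e.\ the odd class symmetry forces each non-negative-order homogeneous component to be a monomial $\gamma_n(x)(i\xi)^n$, so these finitely many components assemble into a differential operator whose subtraction leaves a remainder of order $\leq -1$. Your closing observation about why homogeneity alone (two independent coefficients on the two half-lines $\xi>0$ and $\xi<0$) does not suffice, and that the parity condition is what collapses the component to a polynomial, is precisely the point the paper encodes in \eqref{alfa}.
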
 

The second summand of this expression will be specialized in the sequel to differential operators having symbols 
of order 1. Because of (\ref{alfa}), we can understand these symbols as elements of $Vect(S^1)\otimes Id_V.$

\section{A class of delinearized Kadomtsev-Petviashvili hierarchies with values in non-formal PDOs.} \label{s:2}

Let us assume that $t_1, t_2, \cdots, t_n , \cdots,$ are an infinite number of different formal variables. 
{{} In the usual approach to the KP hierarchy, see for instance \cite{M3}, we use formal series in 
$t_1, \cdots , t_n, \cdots$ as coefficients of formal pseudo-differential operators. Now, these formal series }
can be understood as a specific set of smooth functions 
on the algebraic sum 
$$T= \bigoplus_{n \in \mathbb{N}^*}(\mathbb{R}\,t_n)$$ for the product topology on $T$, which identifies it as a 
topological vector space with the classical inductive limit $\underset{\longrightarrow}{\lim} \R^n = \R^\infty $,  
see \cite{Ma2013,MR2016,GMW2023}. The space $T$ carries natural differentiations with respect to the variables 
$t_1, t_2, \cdots$, so that we can define the space $C^\infty(T, Cl_{odd}(S^1,V)).$ {{} This space will 
be the starting point of our constructions.} The 
decomposition appearing in Proposition \ref{SD}, namely,
$Cl_{odd}(S^1,V) = Cl_{odd}^{-1}(S^1,V) \oplus DO(S^1,V)$, extends naturally to $C^\infty(T, Cl_{odd}(S^1,V))$;
we have
\begin{equation} \label{desc1}
C^\infty(T, Cl_{odd}(S^1,V)) = C^\infty(T, Cl_{odd}(S^1,V))_S \oplus C^\infty(T, Cl_{odd}(S^1,V))_D\; ,
\end{equation}
with 
$$ C^\infty(T, Cl_{odd}(S^1,V))_S = C^\infty(T, Cl_{odd}^{-1}(S^1,V))$$ and 
$$ C^\infty(T, Cl_{odd}(S^1,V))_D = C^\infty(T, DO(S^1,V))\; .$$

{ For later use, we also observe that in the same way, we can consider the two-sided ideal }
$PDO^{-\infty}(S^1,V)$ of the algebra $Cl_{odd}(S^1,V),$ and we 
can form the two-sided ideal $$C^\infty(T, PDO^{-\infty}(S^1,V))$$ of the algebra $C^\infty(T, Cl_{odd}(S^1,V)).$

\smallskip

{ The decomposition (\ref{desc1}) allows us to set up} the {\em delinearized Kadomtsev-Petviashvili 
hierarchy} { on $C^\infty(T, Cl_{odd}(S^1,V))$} with initial value $L_0 \in  Cl_{odd}(S^1,V)\,$:

\begin{equation}
	\label{KP-delin-nonformal}
	\left\{ \begin{array}{ccl}  L(0) & = & L_0 \\
	\frac{dL}{dt_n} & = & \left[ L^n_D , L \right] = - \left[ L^n_S , L \right] \; , \quad \quad n=1,2, \cdots .
	\end{array} \right.
\end{equation}


Now we remark that in this context we also obtain zero 
curvature equations, in analogy with the classical case considered in \cite{M1}. 

\begin{Theorem} \label{ZS0}
Let us assume that $L \in C^\infty(T, Cl_{odd}(S^1,V))$  solves the delinearized KP hierarchy 
$(\ref{KP-delin-nonformal})$. We consider 
	$Z = \sum_k L^{\; k} dt_k, $ and we decompose this one-form as 
	$Z = Z_D - Z_S$. Then, the following two equations hold:
	$$ dZ_D - \left[Z_D , Z_D\right]=0 \quad \mbox{ and } \quad dZ_S - \left[Z_S , Z_S\right]=0\; .$$
\end{Theorem}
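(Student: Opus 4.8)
The plan is to turn the claimed identity of operator-valued two-forms on $T$ into a family of componentwise identities in $Cl_{odd}(S^1,V)$, one for each pair $j,k$, and to verify each of these purely algebraically. Reading $[Z_D,Z_D]$ as the wedge square $Z_D\wedge Z_D=\sum_{j<k}[L^j_D,L^k_D]\,dt_j\wedge dt_k$ (operator multiplication combined with the exterior product of forms), and using $Z_D=\sum_k L^k_D\,dt_k$, $Z_S=-\sum_k L^k_S\,dt_k$, the two asserted equations are equivalent to
\begin{equation*}
\partial_{t_j}L^k_D-\partial_{t_k}L^j_D=[L^j_D,L^k_D]\qquad\text{and}\qquad -\,\partial_{t_j}L^k_S+\partial_{t_k}L^j_S=[L^j_S,L^k_S]
\end{equation*}
for all $j,k$. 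Since these are identities between the coefficients of a single $dt_j\wedge dt_k$, the infinite sum defining $Z$ is never actually summed; the statement is checked one coefficient at a time and no convergence issue arises.

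First I would record how the powers of $L$ evolve. Applying the inner derivation $X\mapsto[L^n_D,X]$ to $L^k$ and combining the Leibniz rule with the hierarchy $(\ref{KP-delin-nonformal})$ gives
\begin{equation*}
\partial_{t_n}L^k=[L^n_D,L^k]=-[L^n_S,L^k],
\end{equation*}
the last equality because $[L^n,L^k]=0$. As the decomposition of Proposition \ref{SD} does not depend on $t$, its two projections $\pi_D$ (onto $DO(S^1,V)$) and $\pi_S$ (onto $Cl_{odd}^{-1}(S^1,V)$) commute with each $\partial_{t_j}$, whence $\partial_{t_j}L^k_D=\pi_D[L^j_D,L^k]$ and $\partial_{t_j}L^k_S=-\pi_S[L^j_S,L^k]$.

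The core of the argument is then an order count together with the trivial relation $[L^j,L^k]=0$. Expanding $L^k=L^k_D+L^k_S$, I would use that a commutator of differential operators is a differential operator, so $\pi_D[L^j_D,L^k_D]=[L^j_D,L^k_D]$, and that a product of two operators of order $\le-1$ has order $\le-2$, so $\pi_D[L^j_S,L^k_S]=0$ while $\pi_S[L^j_S,L^k_S]=[L^j_S,L^k_S]$. Substituting the evolution formula into $\partial_{t_j}L^k_D-\partial_{t_k}L^j_D$ leaves $2[L^j_D,L^k_D]$ plus the mixed terms $\pi_D[L^j_D,L^k_S]-\pi_D[L^k_D,L^j_S]$; applying $\pi_D$ to the expanded identity $[L^j,L^k]=0$ shows precisely that these mixed terms sum to $-[L^j_D,L^k_D]$, which collapses $2[L^j_D,L^k_D]$ to $[L^j_D,L^k_D]$ and yields the first equation. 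The second equation follows by the identical mechanism, now feeding in the form $\partial_{t_n}L^k=-[L^n_S,L^k]$ and applying $\pi_S$ to $[L^j,L^k]=0$.

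I expect the only real subtlety to be the order bookkeeping in the non-formal odd-class setting: one must know that $Cl_{odd}(S^1,V)$ is genuinely closed under the products and commutators used, that each $L^k$ is a bona fide odd-class operator with a well-defined decomposition, and that $Cl_{odd}^{-1}(S^1,V)$ absorbs products into order $\le-2$ so that $\pi_D$ annihilates $[L^j_S,L^k_S]$. All of these are supplied by Proposition \ref{SD} and the algebra structure recalled in Section \ref{s:1}, so that the purely algebraic cancellation above, identical in form to the classical formal computation, goes through verbatim for non-formal operators.
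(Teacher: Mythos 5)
Your proposal is correct and takes essentially the same route as the paper: both derive $\partial_{t_m}L^n=[L^m_D,L^n]=-[L^m_S,L^n]$ from the hierarchy via the Leibniz rule, reduce the two-form identities to the componentwise Zakharov-Shabat equations, and rely on the facts that $DO(S^1,V)$ is a subalgebra while products of operators of order $\le -1$ remain in the $S$-part. The only difference is organizational: the paper adds the $D$-based and $S$-based expansions of $\partial_{t_m}L^n-\partial_{t_n}L^m$ so that the mixed commutators cancel before projecting, whereas you project first and evaluate the mixed terms by applying $\pi_D$ and $\pi_S$ to the identity $[L^j,L^k]=0$ --- the same cancellation performed in a different order.
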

\begin{proof}
	Let $L$ be a solution to our KP hierarchy (\ref{trueKP}), so that for all $k \geq 1$ we have 
	$$\frac{d L}{dt_k} = \left[ L^k_D, L \right] = - \left[ L^k_S, L \right]\; .$$
	Then, for $n,m \geq 1$ we have,
	\begin{eqnarray*}
		\frac{d L^n}{dt_m} &=& \sum_{k = 0}^{n-1} L^k  \frac{d L}{dt_m} L^{n-k-1} \;
		= \; [L^m_D,L^n] = {{} -  } [L^m_S,L^n]\; .
	\end{eqnarray*}
	Therefore, 
	\begin{eqnarray*}
		\frac{d L^n}{dt_m} - \frac{d L^m}{dt_n} 
		& = & -[L^m_S,L^n] + [L^n_S,L^m] \\
		& = & -2 [L^m_S,L^n_S] -  [L^m_S,L^n_D] + [L^n_S,L^m_D]\; ,
	\end{eqnarray*}
	and, in the same way, 
	\begin{eqnarray*}
		\frac{d L^n}{dt_m} - \frac{d L^m}{dt_n} 
		& = & [L^m_D,L^n] - [L^n_D,L^m] \\
		& = & 2 [L^m_D,L^n_D] +  [L^m_D,L^n_S] - [L^n_D,L^m_S]
	\end{eqnarray*}
	Therefore, gathering the two expressions we just obtained, we have 
	\begin{eqnarray*}
		2 \left(\frac{d L^n}{dt_m} - \frac{d L^m}{dt_n} \right)
		& = & -2 [L^m_S,L^n_S] -  [L^m_S,L^n_D] + [L^n_S,L^m_D] +2 [L^m_D,L^n_D] \\ 
		&   & + [L^m_D,L^n_S] - [L^n_D,L^m_S]  \\
		& = & -2 [L^m_S,L^n_S] + 2 [L^m_D,L^n_D]  \; ,  
	\end{eqnarray*}
and thus, by considering the $D-$ and $S-$ projections, we obtain the following zero curvature equations in our
	non-formal context:
\begin{equation} \label{zcr1}
	\frac{d L^n_D}{dt_m} - \frac{d L^m_D}{dt_n} =  [L^m_D,L^n_D] \; ,
\end{equation}
	and
\begin{equation} \label{zcr2}
	\frac{d L^n_S}{dt_m} - \frac{d L^m_S}{dt_n} = -[L^m_S,L^n_S] \; ,
\end{equation}
	for arbitrary $n, m \geq 1$. These equations are equivalent to
	$$dZ_D - \left[Z_D  , Z_D\right] = 0 $$
	and
	$$dZ_S - \left[Z_S , Z_S\right] = 0$$
	respectively.  
\end{proof}

\begin{remark}
	Mulase \cite{M1} gives the same formulas with the notation $dZ_D - \frac{1}{2}\left[Z_D  , Z_D\right]$ and $dZ_S - \frac{1}{2}\left[Z_S , Z_S\right].$ In Mulase's approach, the coefficient $\frac{1}{2}$ is justified by skew-symmetrized operations on differential forms with values in a non-commutative algebra, while our notations integrate the coefficient $\frac{1}{2}$ in the formula. 
\end{remark}

{{ } As an application of Theorem \ref{ZS0}, we now point out that the ``zero curvature equations" 
(\ref{zcr2}) can be understood in a way that reminds 
us of a Yang-Mills action defined on an infinite-dimensional bundle over the space of independent variables $T$.}
Let $\tau_2$ be  the two-sided ideal of Hilbert-Schmidt operators. Let us consider the trivial principal bundle
 $P= T \times G$ where $G$ is a group with Lie 
algebra that contains $\tau_2$, and $\C(P)=\Omega^1(T,\tau_2)$   
the space of its connection 1-forms. We have the following identification for each $n-$cube:  
$$
[-k;k]^n \sim \{(t_1,t_2,...)\in T \, | \, \forall i \in \N_n, -k \leq t_i \leq k \hbox{ and } 
\forall i \in \N - \N_n, t_i=0\}\; ,$$
{where $\forall n \in \N^*, $  $\N_n = \{1,2,3...n\}$.}

Let $\theta = \sum_{i} \theta_{i} dt_i \in \C(P)$ and let $F(\theta) = d\theta - [\theta,\theta]$ be its 
curvature, that we decompose as 
$$F(\theta) = \sum_{i<j} F(\theta)_{i,j} dt_i \wedge dt_j\; ,$$ 
where 
$$F(\theta)_{i,j} \in C^\infty(T,\tau_2) \; .$$
Thus, we identify the curvature $F(\theta)$ with a 2-form on the Adjoint bundle.  
Then, we see that the following conditions are equivalent: 
\begin{equation} 
F(\theta) = 0 
\end{equation}

\begin{equation} 
\forall (i,j) \in (\N^*)^2, tr\left(F(\theta)_{i,j}F(\theta)_{i,j}^* \right)=0 
\end{equation}

\begin{equation} \label{YMZScurvature} 
\forall (k,n,i,j) \in (\N^*)^4 \hbox{ with } 
	i<j, \quad YM_{k,n}(\theta)_{i,j} = \int_{[-k;k]^n} tr\left(F(\theta)_{i,j}F(\theta)_{i,j}^*\right) = 0\; , 
\end{equation}
because $(a,b) \mapsto tr( a b^* )$ is positive-definite, and in particular non-degenerate, on $\tau_2.$

Since $Cl^{-1}_{odd}(S^1,V) \subset \tau_2$, the constructions that lead to (\ref{YMZScurvature}) and 
Theorem \ref{ZS0} yield the following result:

\begin{Theorem}
	The zero-curvature Zakharov-Shabat equations $(\ref{zcr2})$, for the delinearized KP hierarchy 
	$(\ref{KP-delin-nonformal})$, 
	are equivalent to 
	$$ \forall (k,n,i,j) \in (\N^*)^4 \hbox{ with } 
	i<j, \quad YM_{k,n}(Z_S)_{i,j} = \int_{[-k;k]^n} tr\left(F(Z_S)_{i,j}F(Z_S)_{i,j}^*\right) = 0.$$
\end{Theorem}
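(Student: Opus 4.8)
The plan is to chain together two facts already established in the excerpt: the zero-curvature equations of Theorem \ref{ZS0} for the $S$-part, and the equivalence of the three conditions $F(\theta)=0$, the pointwise trace condition, and the vanishing of all the integrated functionals $YM_{k,n}(\theta)_{i,j}$. The whole argument is essentially a matter of verifying that the connection $Z_S$ fits into the Yang-Mills framework set up just before the statement, and then reading off the claimed equivalence.

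First I would observe that, by Proposition \ref{SD} and the decomposition (\ref{desc1}), the $S$-component $L^n_S$ of any solution lies in $C^\infty(T, Cl^{-1}_{odd}(S^1,V))$, so that $Z_S = \sum_k L^k_S\, dt_k$ is a one-form with coefficients in $Cl^{-1}_{odd}(S^1,V)$. Using the inclusion $Cl^{-1}_{odd}(S^1,V) \subset \tau_2$ noted in the excerpt, $Z_S$ is a legitimate element of $\C(P) = \Omega^1(T,\tau_2)$, i.e. a connection 1-form on the trivial bundle $P = T \times G$. This is the identification that licenses us to speak of its curvature $F(Z_S) = dZ_S - [Z_S, Z_S]$ in the sense defined above.

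Next I would invoke Theorem \ref{ZS0}: the equation (\ref{zcr2}) for the $S$-part is precisely the statement $dZ_S - [Z_S, Z_S] = 0$, that is, $F(Z_S) = 0$. Now I apply the chain of equivalences (\ref{YMZScurvature}) with the choice $\theta = Z_S$. Decomposing $F(Z_S) = \sum_{i<j} F(Z_S)_{i,j}\, dt_i \wedge dt_j$ with each component $F(Z_S)_{i,j} \in C^\infty(T,\tau_2)$, the vanishing of the curvature is equivalent to the vanishing of every component, which in turn is equivalent to $tr\bigl(F(Z_S)_{i,j} F(Z_S)_{i,j}^*\bigr) = 0$ pointwise for all $i<j$, because $(a,b)\mapsto tr(ab^*)$ is positive-definite on $\tau_2$. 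Since this integrand is non-negative, its pointwise vanishing is equivalent to the vanishing of the integral $YM_{k,n}(Z_S)_{i,j} = \int_{[-k;k]^n} tr\bigl(F(Z_S)_{i,j} F(Z_S)_{i,j}^*\bigr)$ over every cube $[-k;k]^n$, for all admissible $(k,n,i,j)$.

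The argument is thus a direct composition of two previously proved results, and there is no genuine obstacle of the analytic kind; the one point requiring a moment of care is the positive-definiteness (hence non-degeneracy) of the Hilbert-Schmidt pairing, which is what converts "all integrals vanish" back into "the curvature vanishes" rather than merely into an $L^2$ statement. Concretely, one must note that a continuous non-negative integrand whose integral over every cube $[-k;k]^n$ vanishes must itself vanish identically on $T$, and then that $tr\bigl(F(Z_S)_{i,j} F(Z_S)_{i,j}^*\bigr) = 0$ forces $F(Z_S)_{i,j} = 0$ by definiteness of the pairing. I would make this the explicit content of the proof, the remaining identifications being exactly those recorded in Theorem \ref{ZS0} and in the display (\ref{YMZScurvature}).
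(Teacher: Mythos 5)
Your proposal is correct and follows exactly the route the paper takes: the paper presents this theorem as an immediate consequence of the inclusion $Cl^{-1}_{odd}(S^1,V) \subset \tau_2$, the equivalence chain culminating in (\ref{YMZScurvature}), and the identification of (\ref{zcr2}) with $dZ_S - [Z_S,Z_S]=0$ from the proof of Theorem \ref{ZS0}. Your write-up merely makes explicit the two points the paper leaves implicit (that a continuous non-negative integrand with vanishing integrals over all cubes vanishes identically on $T$, and that positive-definiteness of the Hilbert--Schmidt pairing then forces $F(Z_S)_{i,j}=0$), which is a faithful elaboration of the same argument.
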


\begin{remark}
	All constructions of this section are performed without any assumption on the operator $L,$ neither in 
	terms of order of the operator nor in terms of dressing by a Sato-like operator. However, for the rest of the 
	paper, we assume that $L$ is of constant order 1 and that it can be written as 
$$
L = S_0 \frac{d}{dx} S_0^{-1}
$$ 
where 
$$ 
S_0 \in Cl^{0,*}_{odd}(S^1,V) \hbox{ and } S_0 - Id_V \in Cl^{-1}_{odd}(S^1,V)\; .
$$
	The {{ } necessity} of these conditions {{ } for non-formal pseudo-differential operators} has 
	to be investigated; we leave these questions for future works. 
\end{remark}

\section{Birkoff-Mulase decomposition for Fr\'echet Lie groups of series of non-formal PDOs} \label{s:3}

Let us now fix $L_0 \in Cl_{odd}(S^1,V).$ We consider infinite jets of 
$${{ }  L \in  C^\infty(T,Cl_{odd}(S^1,V) ) } $$ 
with $L(0) = L_0\,$. 

We could write $j^\infty L $ in order to denote the infinite jet of $L$, but since we consider infinite jets 
only in this section, we will note it $L$ instead of $j^\infty L .$ {{ } These jets are expressed as 
series in the T-variables with coefficients in $Cl_{odd}(S^1,V).$ They can be also understood as formal smooth 
functions in the $T-$variables.}
Series in $T-$variables are equipped with the standard valuation used for 
the KP-hierarchy: $val_T(t_n) = n$, see e.g. \cite{M1}.

\begin{remark} We note that our classes differ from the ones appearing in \cite{PPV2024}, where formal pseudodifferential operators, that are series in a formal derivation variable $\xi,$ with monomial coefficients in trace-class ideals, are considered.  In our context, formal series are with respect to the $T-$variables. 
	\end{remark}

\begin{Definition} \label{dfs}
The set of odd class $T-$pseudo-differential operators is the set {{ } of infinite jets} 
	\begin{equation} \label{alpha}
	Cl_{T,odd}(S^1,V) \subset Cl_{odd}(S^1,V)[[t_1,t_2,...]] \; ,
	\end{equation}
	such that the coefficient $a_n$ of a monomial of order $n$ lies in $Cl_{odd}^n(S^1,V).$
\end{Definition}

 This definition tries to capture explicitly (but formally) the ``time dependence'' of the dependent variables 
appearing in the KP hierarchy. The idea of using power series in infinitely many times is due to Mulase, see 
\cite{M3}, who used it in his analysis of the standard KP hierarchy.  In Mulase's context, $T$-series 
(with coefficients in a given ring) form the algebra of coefficients of the 
formal pseudo-differential operators appearing in the KP hierarchy. The fact that we can impose some control over 
their growth is a technical observation at the core of the proof of the deep  
factorization theorem appearing in \cite{M3} (a review of this theorem appears in \cite{ER2013}). We have used 
different versions and adaptations of Mulase's insight in \cite{Ma2013,MR2016,MR2018}. 

\medskip

We state the following result on the structure of $Cl_{T,odd}(S^1,V)$: 

\begin{Theorem} \label{fs}
	The set $Cl_{T,odd}(S^1,V)$ is a Fr\'echet algebra, and its group of units, $Cl_{T,odd}^*(S^1,V)$, 
	is a regular Fr\'echet Lie group. The Lie algebra of  $Cl_{T,odd}^*(S^1,V)$ is precisely $Cl_{T,odd}(S^1,V).$
\end{Theorem}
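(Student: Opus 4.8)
The plan is to establish three things in sequence: that $Cl_{T,odd}(S^1,V)$ is a Fréchet algebra, that its group of units is open (hence a Fréchet Lie group by inversion being smooth), and finally that this Lie group is regular with the stated Lie algebra. My starting point is the already-established fact (the Proposition summarizing \cite{MR2018}) that each graded piece $Cl^n_{odd}(S^1,V)$ carries a Fréchet topology from the Kontsevich--Vishik topology, and that $Cl^0_{odd}(S^1,V)$ is itself a Fréchet algebra with open, regular-Fréchet-Lie unit group $Cl^{0,*}_{odd}(S^1,V)$.

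First I would set up the topology on $Cl_{T,odd}(S^1,V)$. By Definition \ref{dfs}, an element is a series $\sum_{n\geq 0} a_n$ where the degree-$n$ coefficient (with respect to the $T$-valuation $val_T$) lies in $Cl^n_{odd}(S^1,V)$; I would topologize the whole space as a countable projective limit of finite-truncation spaces, each of which is a finite product of the Fréchet spaces $Cl^n_{odd}(S^1,V)$, giving a Fréchet space. The key algebraic point is that multiplication respects the grading: the product of a $val_T$-degree-$n$ term and a $val_T$-degree-$m$ term lands in $val_T$-degree $n+m$, and because the symbol product of an order-$n$ and an order-$m$ odd class operator is order $n+m$ and again odd class, the coefficient condition ``degree-$k$ coefficient in $Cl^k_{odd}$'' is preserved. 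Continuity of multiplication follows from continuity at each finite truncation. This yields the Fréchet algebra claim.

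Next, for the group of units, the standard strategy is to show $Cl^*_{T,odd}(S^1,V)$ is open and that inversion is smooth. The leading coefficient $a_0 \in Cl^0_{odd}(S^1,V)$ controls invertibility: an element is a unit iff $a_0 \in Cl^{0,*}_{odd}(S^1,V)$, because the higher-degree terms are nilpotent-like with respect to the $T$-valuation, so a Neumann-type series in the $T$-variables inverts the ``$1 + $ (positive degree)'' factor once the degree-zero part is inverted. Since $Cl^{0,*}_{odd}$ is open in $Cl^0_{odd}$ (the cited Proposition), openness of the unit group follows by pulling back along the continuous projection to the degree-zero coefficient. Smoothness of inversion reduces to smoothness of $a_0 \mapsto a_0^{-1}$ (already known, as $Cl^{0,*}_{odd}$ is a Fréchet Lie group) combined with the fact that the correcting Neumann series in the $T$-variables converges degree-by-degree and depends smoothly on the data, each truncation being a polynomial expression in finitely many coefficients and the single inverse $a_0^{-1}$.

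The main obstacle, and the last step, is regularity: I must show that every smooth path $\xi:[0,1]\to Cl_{T,odd}(S^1,V)$ in the Lie algebra admits a smooth solution to the right-logarithmic-derivative equation $g'(s) = g(s)\,\xi(s)$ with $g(0)=\mathrm{Id}$, depending smoothly on $\xi$. Here the $T$-grading is exactly what makes this tractable: I would solve the equation recursively by $val_T$-degree. The degree-zero component reduces to the evolution equation in the \emph{already regular} Fréchet Lie group $Cl^{0,*}_{odd}(S^1,V)$, whose solvability is guaranteed by the cited regularity result; each higher-degree component then satisfies an inhomogeneous \emph{linear} ODE whose inhomogeneity involves only strictly-lower-degree data already solved for, and whose homogeneous part is again governed by the degree-zero evolution. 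These linear equations are solved by an explicit variation-of-constants/integrating-factor formula, which manifestly produces smooth dependence on $\xi$ at each truncation. Patching the degree-by-degree solutions gives the evolution operator in $Cl^*_{T,odd}(S^1,V)$, and smoothness in the projective-limit topology follows since it holds at every finite truncation. The Lie algebra identification is then immediate: the tangent space at the identity is the full Fréchet space $Cl_{T,odd}(S^1,V)$, since the unit group is open in it. I expect the most delicate bookkeeping to be verifying that the recursive solution genuinely lies in $Cl_{T,odd}(S^1,V)$ — i.e., that the degree-$k$ piece produced at each stage stays in $Cl^k_{odd}(S^1,V)$ — which again rests on the order-additivity and odd-class-closure of the symbol product.
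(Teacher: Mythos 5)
Your proposal is correct, and it rests on the same two pillars as the paper's proof: the $T$-valuation grading and the previously established regularity of $Cl^{0,*}_{odd}(S^1,V)$. The implementation, however, is genuinely different. The paper modularizes the argument: Lemma \ref{regulardeformation2} proves regularity of the unipotent group $1+\mathcal{A}$ of series with positive valuation by exhibiting the evolution operator as a limit of product integrals, whose valuation-$m$ components are finite sums of iterated integrals; then the short exact sequence $0 \rightarrow 1+Cl_q(S^1,E) \rightarrow Cl^{0,*}(S^1,E)+Cl_q(S^1,E) \rightarrow Cl^{0,*}(S^1,E) \rightarrow 0$ is fed into the Kriegl--Michor theorem on regularity of extensions of regular Lie groups (\cite[Section 38.6]{KM}), and Theorem \ref{fs} follows by identifying $Cl_{T,odd}^*(S^1,V)$ with the middle term of that sequence (the $q$-valuation and the $T$-valuation coincide). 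Your degree-by-degree recursion --- degree zero solved in the regular group $Cl^{0,*}_{odd}(S^1,V)$, each higher degree by variation of constants with the degree-zero flow as integrating factor --- in effect in-lines both of these ingredients at once: unrolling your recursion reproduces exactly the iterated integrals of Lemma \ref{regulardeformation2} conjugated by the degree-zero solution, and the induction over valuation is precisely what the proof of the cited extension theorem does in this split situation. What your route buys is self-containedness (no black-box citation) together with explicit proofs of the Fr\'echet-algebra structure and of openness of the unit group via the projection onto the degree-zero coefficient and a Neumann series --- points the paper handles only implicitly, through the identification with $Cl^{0,*}+Cl_q$ and the earlier Gl\"ockner-based results. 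What the paper's route buys is brevity and a reusable general lemma valid for any graded Fr\'echet algebra of this type. One small point in your favour: the degree-zero factor should indeed be the odd-class group $Cl^{0,*}_{odd}(S^1,V)$, as you write, whereas the paper's displayed sequence carries the group $Cl^{0,*}(S^1,E)$ without the odd-class restriction.
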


Before proving this theorem, we need a technical lemma which is, essentially, a special case of a 
similar result that is valid in a more general context, see e.g. \cite{GMW2023}. Since here we state it only in 
the framework of Fr\'echet manifolds, we give a complete proof with optimal arguments on the Fr\'echet setting. 

We assume that we have at hand a sequence of Fr\'echet vector spaces $(A_n)_{n \in \N^*},$ ordered by inclusion,  
and such that the canonical inclusion map is continuous.    
Moreover, we assume that there exist bilinear smooth maps $$A_n \times A_m \rightarrow A_{n+m}$$ that define a 
multiplication on 
$$ \mathcal{A} = \prod_{n \in \N^*} A_n\; ,$$ 
and we identify $\mathcal{A}$ with 
$$\left\{ \sum_{n=1}^{+ \infty} a_n q^n \, | \, a_n \in A_n \right\}\; ,$$
in which $q$ is some formal variable. 
We remark that multiplication and addition on $\mathcal{A}$ fit with classical addition and 
multiplication on formal series, and 
that we can even understand $\mathcal{A}$ as an ideal of a wider space of formal series 
$$\left\{ a_0 q^0 + a \, | \, (a_0,a) \in \C \times \mathcal{A} \right\}$$
with natural  $q-$valuation.
We obtain the following result: 

\begin{Lemma} \label{regulardeformation2}
The group $1 + \mathcal{A}$ is a regular Fr\'echet Lie group with Lie algebra $\mathcal{A}.$ 
\end{Lemma}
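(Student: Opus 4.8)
The plan is to establish the Lie group structure of $1 + \A$ by exploiting the $q$-adic filtration that makes $\A$ a pro-nilpotent-type object, so that all the analytic constructions (inversion, exponential, product integral) can be carried out term by term in $q$ and shown to converge in the Fr\'echet topology. First I would fix the Fr\'echet structure on $\A = \prod_{n \geq 1} A_n$ as a countable product of Fr\'echet spaces, hence itself Fr\'echet, with the seminorms being those of the factors $A_n$ composed with the projections. The key structural observation is that for $a = \sum_{n \geq 1} a_n q^n \in \A$, the element $a$ has no constant term, so any element $1 + a$ of $1 + \A$ is automatically invertible in the ambient algebra of series $\{a_0 q^0 + a\}$ by the geometric (Neumann) series $(1+a)^{-1} = \sum_{k \geq 0} (-a)^k$; since $a$ has $q$-valuation at least $1$, the power $a^k$ has $q$-valuation at least $k$, so only finitely many terms contribute to each coefficient of $q^n$. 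This shows $1 + \A$ is a group and gives an explicit formula for inversion.

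Next I would verify smoothness of multiplication and inversion. The multiplication $A_n \times A_m \to A_{n+m}$ is assumed bilinear and smooth, and the coefficient of $q^n$ in a product $(1+a)(1+b)$ is a finite sum of such bilinear terms, hence a smooth (indeed polynomial) map into $A_n$; composing with continuous projections shows multiplication $ (1+\A) \times (1+\A) \to 1 + \A$ is smooth. For inversion I would use the Neumann series formula above: each coefficient of $q^n$ in $(1+a)^{-1}$ is a finite polynomial expression in the $a_1, \dots, a_n$ built from the bilinear products, hence smooth in $a$. This establishes that $1 + \A$ is a Fr\'echet Lie group with Lie algebra $\A$; the identification of the Lie algebra follows because the tangent space at $1$ is canonically $\A$ and the bracket is the commutator inherited from the associative multiplication.

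The main work, and the step I expect to be the principal obstacle, is \emph{regularity} in the Milnor sense: given a smooth path $t \mapsto \xi(t) \in \A$, I must solve the left-logarithmic (or right-logarithmic) differential equation $\gamma'(t) = \gamma(t)\,\xi(t)$, $\gamma(0) = 1$, inside $1 + \A$, and show that the evolution operator $\mathrm{evol}(\xi) = \gamma(1)$ depends smoothly on $\xi$. The strategy is again to solve order by order in $q$: writing $\gamma(t) = 1 + \sum_{n \geq 1} \gamma_n(t) q^n$ and matching coefficients of $q^n$, the equation for $\gamma_n$ involves only $\gamma_1, \dots, \gamma_{n-1}$ and the components of $\xi$ up to order $n$, so it reduces to a triangular system of ordinary linear integral equations in the Fr\'echet spaces $A_1, \dots, A_n$. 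Each such equation is solved by an ordinary Picard iteration / Duhamel formula, and the resulting $\gamma_n(t)$ is a finite iterated-integral expression in the $\xi$-coefficients. The delicate points are, first, that these Banach-valued (or Fr\'echet-valued) ODEs genuinely have smooth solutions and that the iterated integrals converge — which is immediate since the system is finite at each order — and second, that $\mathrm{evol}$ is smooth as a map $C^\infty([0,1],\A) \to 1+\A$, which again follows because each coefficient of the output is a fixed smooth (polynomial-in-integrals) functional of the input coefficients, and smoothness into the product $\A$ is checked projection by projection. I would close by remarking that this termwise-triangular structure is exactly the feature that lets the Fr\'echet topology do no harm: there is no genuine infinite-dimensional analysis to control beyond the already-assumed smoothness and continuity of the bilinear products $A_n \times A_m \to A_{n+m}$.
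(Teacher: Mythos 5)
Your proposal is correct, and it rests on exactly the same structural insight as the paper's proof: the $q$-valuation makes everything triangular, so each $A_m$-coefficient of any product, inverse, or evolution operator is a \emph{finite} expression in the input coefficients, and the resulting solution of the logarithmic ODE is a finite sum of Chen/Dyson iterated integrals $\int_{1\geq s_1 \geq \dots \geq s_k \geq 0} \bigl[\prod_{i=1}^k v(s_i)\bigr]_m \, (ds)^k$ with $k \leq m$. The only real difference is how the evolution operator is produced. The paper constructs it as a limit of discretized products $u_n(s) = \bigl(1+(s-\tfrac{j}{n})v(\tfrac{j}{n})\bigr)\prod_{i=1}^{j}\bigl(1+\tfrac{1}{n}v(\tfrac{j-i}{n})\bigr)$, i.e.\ a product-integral argument in which each approximant visibly lies in $1+\A$ and whose coefficients converge to the iterated integrals; you instead match coefficients of $q^n$ in $\gamma' = \gamma\,\xi$ and solve the resulting triangular system recursively by Duhamel's formula. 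Your route is arguably more direct: existence, uniqueness and smooth dependence come for free from the recursion, with no convergence-of-approximants step to justify (and indeed no genuine Picard iteration is needed, since the system is explicitly recursive rather than a fixed-point problem). The paper's product-integral construction, on the other hand, keeps the group-valued character of the approximating paths manifest and is the form of the argument that generalizes to the broader setting alluded to in \cite{GMW2023}. Both arguments prove the same statement with the same underlying mechanism, so this is a stylistic rather than substantive divergence; your write-up would serve as a valid replacement proof.
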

\noindent
\begin{proof} The set $1+\mathcal{A}$ is clearly a Fr\'echet manifold (in fact, an affine space) since 
$\mathcal{A}$ is a Fr\'echet vector space. It is also a Fr\'echet Lie group because of the classical 
formulas of multiplication and inversion on formal series, and its Lie algebra can be clearly 
identified with $\mathcal{A}$. 

Let us now show that the exponential of paths $C^\infty([0;1],\A) \rightarrow C^\infty([0;1], 1+\A)$ is a smooth
map. 
Let $v \in C^\infty([0,1],\A)$. We take $s \in [0;1]$ and $j = \lfloor ns \rfloor.$ We define
$$u_n(s) = \left(1+\left(s - \frac{j}{n}\right)v\left(\frac{j}{n}\right)\right) 
\prod_{i = 1}^{j}  \left(1+\frac{1}{n} v\left(\frac{j-i}{n}\right)\right).$$

\noindent We have that 
$$\lim_{n \rightarrow +\infty}\partial_s u_n(s) . u_n^{-1}(s) = \lim_{n \rightarrow +\infty} 
v\left(\frac{j}{n}\right)\left(1+\left(s - \frac{j}{n}\right)v\left(\frac{j}{n}\right)\right)^{-1} = v(s). $$
Moreover, the $A_m$ component of the product converges to a sum of integrals of the type  
$$
\int_{1\geq s_1 \geq ... \geq s_k\geq 0} \left[ \prod_{i = 1}^k v(s_i)\right]_m (ds)^k
$$
for $k \leq m$. This shows convergence to a path $u\in C^\infty([0;1];1+\A)$ satisfying 
$$ \partial_s u(s).u^{-1}(s) = v(s)$$ that smoothly depends on the path $v \in C^\infty([0;1],\A)$ in the 
Fr\'echet sense. 
\end{proof}

\smallskip

\begin{Definition}
Let $q$ be a formal parameter. 
We define the algebra of formal series 
$$Cl_q(S^1,E) = 
\left\{ \sum_{k \in \N^*} q^k a_k \,|\, \forall k \in \N^*, a_k \in \left[Cl_{T,odd}(S^1,E)\right]_k \right\},$$
in which $\left[Cl_{T,odd}(S^1,E)\right]_k$ denotes the (Fr\'echet) vector space of monomials with 
{{ } $T$-valuation} equal to $k.$
\end{Definition}
This is obviously an algebra, graded by the order (the valuation) of the variable  $q.$ Thus, setting
$$ A_n = \left\{ q^n a_n | a_n \in\left[Cl_{T,odd}(M,E)\right]_n \right\} ,$$
we state the following consequence of Lemma \ref{regulardeformation2}, changing $\mathcal{A}$ by $Cl_q(S^1,E)$ 
in this context:

\begin{Lemma}
The group $1 + Cl_q(S^1,E)$ is a regular Fr\'echet Lie group with Lie algebra $Cl_q(S^1,E).$
\end{Lemma}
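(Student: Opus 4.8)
The plan is to deduce this statement directly from Lemma \ref{regulardeformation2} by exhibiting $Cl_q(S^1,E)$ as a space of the abstract form $\mathcal{A} = \prod_{n \in \N^*} A_n$ treated there. Concretely, I set
$$ A_n = \left\{ q^n a_n \mid a_n \in \left[Cl_{T,odd}(S^1,E)\right]_n \right\}, $$
so that $\prod_{n \in \N^*} A_n = Cl_q(S^1,E)$ by the very definition of the latter, with $q$ playing the role of the formal variable of the lemma. It then remains only to verify that the three structural hypotheses of Lemma \ref{regulardeformation2} hold for this choice, after which the conclusion — that $1 + Cl_q(S^1,E)$ is a regular Fr\'echet Lie group with Lie algebra $Cl_q(S^1,E)$ — is immediate.

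First I would check that each $A_n$ is a Fr\'echet vector space. A monomial of $T$-valuation $n$ has the shape $t_1^{\alpha_1} t_2^{\alpha_2}\cdots$ with $\sum_{j} j\,\alpha_j = n$; since this equation has only finitely many solutions, $\left[Cl_{T,odd}(S^1,E)\right]_n$ is a \emph{finite} direct sum of copies of $Cl_{odd}^{n}(S^1,V)$, one for each such monomial. As $Cl_{odd}^{n}(S^1,V)$ is a closed subspace of $Cl^{n}(S^1,V)$ (cf. \cite{MR2018}), it is Fr\'echet for the Kontsevich--Vishik topology, and hence so is the finite sum $A_n \cong \left[Cl_{T,odd}(S^1,E)\right]_n$. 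The ``ordered by inclusion'' requirement is supplied, at the level of coefficient spaces, by the order filtration $Cl_{odd}^{n}(S^1,V) \hookrightarrow Cl_{odd}^{n+1}(S^1,V)$, whose inclusions are continuous for the Kontsevich--Vishik topology.

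The main point to verify — and the one deserving care — is that the graded multiplication $A_n \times A_m \rightarrow A_{n+m}$ is well defined and smooth. For well-definedness two bookkeeping facts must be combined: multiplying two monomials adds their $T$-valuations, so a $T$-valuation-$n$ term times a $T$-valuation-$m$ term has $T$-valuation $n+m$; and the composition of a classical odd class operator of order $\leq n$ with one of order $\leq m$ is again an odd class operator, now of order $\leq n+m$, because orders add under composition of classical pseudo-differential operators and $Cl_{odd}$ is closed under composition (it is an algebra, \cite{MR2018}). Hence $q^n a_n \cdot q^m b_m = q^{n+m}(a_n b_m) \in A_{n+m}$, precisely as Definition \ref{dfs} demands that the coefficient of a $T$-valuation-$(n+m)$ monomial lie in $Cl_{odd}^{n+m}$. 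Smoothness of this bilinear map then reduces to the continuity of operator composition $Cl^{n}(S^1,V) \times Cl^{m}(S^1,V) \rightarrow Cl^{n+m}(S^1,V)$ in the Kontsevich--Vishik topology, a standard property of that topology (see \cite{KV1,Ma2006}); a continuous bilinear map between Fr\'echet spaces is automatically smooth.

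With the three hypotheses in hand, Lemma \ref{regulardeformation2}, applied with $\mathcal{A} = Cl_q(S^1,E)$, yields at once that $1 + Cl_q(S^1,E)$ is a regular Fr\'echet Lie group whose Lie algebra is $Cl_q(S^1,E)$. The only subtlety I anticipate is ensuring that the order--valuation coupling of Definition \ref{dfs} is respected by the product, i.e. that no cross term escapes the prescribed grading; this is exactly the combined bookkeeping of the previous paragraph, and once it is confirmed the remainder is a verbatim transcription of the abstract lemma.
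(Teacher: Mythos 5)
Your proposal is correct and takes essentially the same route as the paper: the paper also sets $A_n = \left\{ q^n a_n \mid a_n \in \left[Cl_{T,odd}(S^1,E)\right]_n \right\}$ and obtains the lemma as a direct consequence of Lemma \ref{regulardeformation2} with $\mathcal{A}$ replaced by $Cl_q(S^1,E)$. The only difference is that you spell out the verification of the hypotheses of that lemma (Fr\'echet structure of each $A_n$, continuity of the inclusions, and smoothness of the graded multiplication), which the paper leaves implicit.
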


A direct consequence of  {{}\cite[section 38.6, page 413]{KM}} to the short exact sequence of 
Fr\'echet Lie groups:
$$ 0 \rightarrow 1 + Cl_q(S^1,E) \rightarrow Cl^{0,*}(S^1,E) + Cl_q(S^1,E) \rightarrow Cl^{0,*}(S^1,E) 
\rightarrow 0$$  
is the following:
\begin{Lemma} \label{regulardeformation}
$Cl^{0,*}(S^1,E) + Cl_q(S^1,E)$ is a regular Lie group with Lie algebra $Cl^{0}(S^1,E) + Cl_q(S^1,E).$
\end{Lemma}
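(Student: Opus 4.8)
The plan is to realize $Cl^{0,*}(S^1,E) + Cl_q(S^1,E)$ as the total space of an extension of regular Fréchet Lie groups and then to invoke the regularity criterion for extensions \cite[Section 38.6, p.~413]{KM}. First I would fix the Fréchet manifold structure. The space $Cl^0(S^1,E) + Cl_q(S^1,E)$ is a Fréchet vector space, being the $q$-graded sum of the Fréchet space $Cl^0(S^1,E)$ and the monomial components $\left[Cl_{T,odd}(S^1,E)\right]_k$. An element $g + h$ with $g \in Cl^{0,*}(S^1,E)$ and $h$ of positive $q$-valuation is invertible if and only if its leading term $g$ is, since $g + h = g(1 + g^{-1}h)$ with $g^{-1}h \in Cl_q(S^1,E)$ and $1 + g^{-1}h$ invertible by the $q$-adic geometric series; hence $Cl^{0,*}(S^1,E) + Cl_q(S^1,E)$ is the $\pi$-preimage of $Cl^{0,*}(S^1,E)$ under evaluation at $q=0$, and so an open subset of $Cl^0(S^1,E) + Cl_q(S^1,E)$ because $Cl^{0,*}(S^1,E)$ is open in $Cl^0(S^1,E)$. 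This makes it a Fréchet manifold on which the degree-by-degree multiplication and inversion formulas on formal series are smooth, giving a Fréchet Lie group with Lie algebra $Cl^0(S^1,E) + Cl_q(S^1,E)$.

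Next I would verify exactness and smoothness of the displayed sequence. The reduction map $\pi\colon g + h \mapsto g$ (evaluation at $q = 0$) is a smooth surjective group homomorphism onto $Cl^{0,*}(S^1,E)$, its kernel is precisely the closed normal subgroup $1 + Cl_q(S^1,E)$, and the inclusion of the latter is a smooth embedding. Crucially, $\pi$ admits the smooth global section $s\colon g \mapsto g$ furnished by the canonical inclusion $Cl^{0,*}(S^1,E) \hookrightarrow Cl^{0,*}(S^1,E) + Cl_q(S^1,E)$; thus the extension is, as a fibered manifold, a trivial product, so the submersion and local-triviality hypotheses required by the extension criterion are automatically met.

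Finally I would assemble the ingredients. The kernel $1 + Cl_q(S^1,E)$ is a regular Fréchet Lie group by the preceding Lemma, itself a specialization of Lemma \ref{regulardeformation2}, while the quotient $Cl^{0,*}(S^1,E)$ is a regular Fréchet Lie group by the regularity statement recorded at the beginning of Section \ref{s:1}. Applying \cite[Section 38.6]{KM} to the short exact sequence of regular Fréchet Lie groups then yields that $Cl^{0,*}(S^1,E) + Cl_q(S^1,E)$ is a regular Fréchet Lie group. Its Lie algebra fits into the extension $0 \to Cl_q(S^1,E) \to \mathfrak{g} \to Cl^0(S^1,E) \to 0$, which splits as a Fréchet space via the derivative of $s$, giving $\mathfrak{g} \cong Cl^0(S^1,E) + Cl_q(S^1,E)$ as claimed.

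I expect the main obstacle to be the bookkeeping needed to check that the extension genuinely meets the precise hypotheses of \cite[Section 38.6]{KM}: that $\pi$ is a smooth submersion with smooth local sections, and that the product integrals constructed separately on the kernel and on the quotient recombine into a smooth inverse of the right-logarithmic derivative on the total group. The existence of the global section $s$ trivializes most of this verification, so the residual work is confirming that the regular exponentials of the two factors glue smoothly across the extension, which is exactly the situation \cite[Section 38.6]{KM} is designed to handle.
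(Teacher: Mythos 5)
Your proposal is correct and takes essentially the same route as the paper: the paper's entire proof consists of applying \cite[Section 38.6]{KM} to the short exact sequence $1 + Cl_q(S^1,E) \rightarrow Cl^{0,*}(S^1,E) + Cl_q(S^1,E) \rightarrow Cl^{0,*}(S^1,E)$, with the kernel regular by the preceding lemma and the quotient regular by the results recalled in Section~\ref{s:1}. Your extra verifications (openness of the unit group via invertibility of the $q^0$ term, the canonical global section, and the resulting splitting of the Lie algebra) are exactly the hypotheses the paper leaves implicit in calling the lemma a ``direct consequence'' of that citation.
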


We can now finish the proof of Theorem \ref{fs}:

\begin{proof}[Proof of Theorem \ref{fs}.]
We note that the $q-$valuation and the $T-$valuation coincide, therefore one can identify 
$Cl_{T,odd}^*(S^1,V)$ with $Cl^{0,*}(S^1,E) + Cl_q(S^1,E)$, for instance by replacing formally  $q$ 
for $1$.
 Therefore, this result is essentially an application of Lemma \ref{regulardeformation}. 
\end{proof}   
%
	\begin{remark}
		The assumption $a_n \in Cl^n_{odd}$ in Definition $\ref{dfs}$ can be relaxed 
		to the condition 
		$$a_0 \in Cl^{0,*}_{odd} \hbox{ and } \forall n \in \N^*, a_n \in Cl_{odd}\; ;$$ 
		this is sufficient for having a regular Lie group. However,  a smooth Birkhoff-Mulase 
		decomposition (see Theorem \ref{non-formal-birkhoff} below) seems to fail in this more general context.
		Following \cite{MR2016}, we find that the growth conditions imposed in {\rm (\ref{alpha})}  will 
		ensure both regularity and also existence of a Birkhoff-Mulase decomposition. 
	\end{remark}

The decomposition $L = L_S + L_D$, $L_S \in Cl^{-1}_{odd}(S^1,V)$, 
$L_D \in DO^1(S^1,V)$, which is valid on $Cl_{odd}(S^1,V)$, see Proposition \ref{SD}, extends straightforwardly 
to the algebra $Cl_{T,odd}(S^1,V)$.  As a consequence, we have the following decomposition on the 
group $Cl_{T,odd}^*(S^1,V):$

\begin{Theorem} \label{non-formal-birkhoff}
    Let $S_0 \in Cl^{0,*}_{odd}(S^1,V)$ such that $S_0 - Id_V \in Cl^{-1}(S^1,V).$ Let 
    $L_0 = S_0 \left(\frac{d}{dx}\right) S^{-1}_0.$ Let 
    $U(t_1,...,t_n,...) = \exp\left(\sum_{n \in N^*} t_n (L_0)^n\right) \in Cl_{T,odd}^*(S^1,V).$ Then, 
    there exists a unique pair $(S,Y)$ such that
		\begin{enumerate}
			\item $U = S^{-1}Y,$ 
			\item $Y \in Cl_{T,odd}^*(S^1,V)_D$
			\item $S \in Cl_{T,odd}^*(S^1,V)$ and $S - 1 \in Cl_{T,odd}(S^1,V)_S.$ 
		\end{enumerate}
		Moreover, the map 
\begin{equation} \label{map}
S_0\in Cl^{0,*}_{odd}(S^1,V)\times T \mapsto \left(U(t_1,t_2,...),Y(t_1,t_2,...), S(t_1,t_2,...)\right)
		\in (Cl_{odd}^*(S^1,V))^3
\end{equation}
		is smooth.
\end{Theorem}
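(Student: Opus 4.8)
The plan is to rewrite the desired factorization $U = S^{-1}Y$ as the equivalent relation $SU = Y$ and to solve it recursively in the $T$-valuation, using the direct sum decomposition of Proposition \ref{SD} extended to $Cl_{T,odd}(S^1,V)$. I would expand each operator into its homogeneous components indexed by $T$-valuation, $U = 1 + \sum_{k\geq 1} U_k$, $S = 1 + \sum_{k\geq 1} S_k$, $Y = 1 + \sum_{k\geq 1} Y_k$, noting that $U_0 = S_0 = Y_0 = 1$ because $U \equiv 1$ modulo valuation $\geq 1$. At valuation $k$ the relation $SU = Y$ then reads $S_k + P_k = Y_k$, where $P_k = \sum_{i=0}^{k-1} S_i\, U_{k-i}$ involves only components of valuation $< k$ and is therefore already known. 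Imposing $S_k \in Cl^{-1}_{odd}(S^1,V)$ (the $S$-part) and $Y_k \in DO(S^1,V)$ (the $D$-part) and projecting onto the two summands of Proposition \ref{SD} forces $S_k = -(P_k)_S$ and $Y_k = (P_k)_D$. This determines $S$ and $Y$ uniquely and constructively, order by order.

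Uniqueness can also be read off directly: if $S_1^{-1}Y_1 = S_2^{-1}Y_2$, then $S_2 S_1^{-1} = Y_2 Y_1^{-1}$, where the left-hand side lies in $1 + Cl_{T,odd}(S^1,V)_S$ (since these series form a group) and the right-hand side is a unit of the differential-operator part $Cl_{T,odd}^*(S^1,V)_D$. An invertible differential operator of the form $1 + (\text{order} \leq -1)$ must equal $1$, because differential operators have no component of negative order; hence $S_1 = S_2$ and $Y_1 = Y_2$.

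For existence the point I would verify carefully is that the recursively defined $S$ and $Y$ genuinely belong to $Cl_{T,odd}^*(S^1,V)$, that is, that the growth condition of Definition \ref{dfs} — the valuation-$k$ coefficient must have order $\leq k$ — survives each step. Since $L_0^n$ has order $n$ while $t_n$ has valuation $n$, every $U_{k-i}$ has order $\leq k-i$, and multiplying by $S_i$ (of order $\leq 0$) keeps the order of $P_k$ bounded by $k$, so that $(P_k)_D$ and $(P_k)_S$ respect the required bound. This is exactly where Mulase's insight, encoded in the growth restriction (\ref{alpha}), enters, and it is what breaks down under the relaxed hypothesis discussed in the remark following Theorem \ref{fs}. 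Equivalently, differentiating $SU = Y$ and using $\partial_{t_n} U = L_0^n U$ recasts the construction as the integration of the system $\partial_{t_n} S = -(L^n)_S\, S$ with $S(0) = 1$ and $L = S L_0 S^{-1}$, whose cross-compatibility in the several times $t_n$ is guaranteed by the zero-curvature equations of Theorem \ref{ZS0}.

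Finally, smoothness of the map (\ref{map}) would follow from the regular Fréchet Lie group structure of $Cl_{T,odd}^*(S^1,V)$ established in Theorem \ref{fs}: the operator $L_0 = S_0 (d/dx) S_0^{-1}$ depends smoothly on $S_0$ by smoothness of multiplication and inversion, $U = \exp(\sum_n t_n L_0^n)$ depends smoothly on $(S_0,t)$ through the regular-group exponential, and the factorization $U \mapsto (S,Y)$ is smooth because each homogeneous component is built from finitely many products together with the bounded projections onto the $D$- and $S$-parts, assembled through the evolution/product-integral construction of Lemma \ref{regulardeformation2}. I expect the main obstacle to be precisely this transfer from the purely algebraic recursion to the analytic Fréchet setting: showing that the order-by-order construction converges in the Kontsevich-Vishik topology to genuine elements of the regular Lie group and depends smoothly on the parameters, all while keeping the growth condition of Definition \ref{dfs} under control.
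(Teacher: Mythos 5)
Your proposal is correct, and it takes a genuinely different route from the paper's own proof. The paper does not run any recursion: it passes to formal symbols, invokes the already-established Birkhoff--Mulase factorization for \emph{formal} pseudo-differential operators (\cite{M1,M3}, in the form used in \cite{ER2013,ERMR,MR2016}) to obtain $\sigma(U)=\sigma(S)^{-1}\sigma(Y)$, which determines $S$ and $Y$ only up to smoothing operators, and then removes this ambiguity by a single observation: $\sigma(Y)$ is a $T$-series of symbols of \emph{differential} operators, and a differential operator is uniquely recovered from its symbol; hence $Y$ is a well-defined non-formal operator, $S=YU^{-1}$ is too, and smoothness of the map (\ref{map}) is imported from the symbol-level result of \cite{MR2016}. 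You instead reprove the factorization from scratch, directly on non-formal operators rather than on classes modulo smoothing operators: the valuation-by-valuation recursion $S_k=-(P_k)_S$, $Y_k=(P_k)_D$ (essentially Mulase's mechanism), the uniqueness argument from the fact that $1+Cl_{T,odd}(S^1,V)_S$ and the invertible elements of the $D$-part intersect only in $1$, the order count showing that the growth condition of Definition \ref{dfs} propagates through the recursion, and component-wise smoothness in the product topology. Your route buys self-containedness, explicit formulas, and a transparent explanation of why the growth restriction (\ref{alpha}) matters --- precisely the point of the remark following Theorem \ref{fs}; the paper's route buys brevity and isolates the one conceptual reason the formal statement lifts to a non-formal one (injectivity of the symbol map on differential operators), a fact your argument uses only implicitly through the directness of the sum in Proposition \ref{SD}. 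Both treatments leave the same fine point informal, namely continuity/smoothness of the projections $(\cdot)_D$ and $(\cdot)_S$ in the Kontsevich--Vishik topology: you at least flag it as the ``bounded projections'' step, while the paper delegates it to \cite{MR2016}.
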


\noindent
{\bf Notation.} We note by {{ } $Cl^{-1,*}_{odd}(S^1,V)$} the group of {{ } ``dressing operators"} 
$S \in Cl^{0,*}_{odd}(S^1,V)$ such that 
$Id - S \in Cl^{-1}_{odd}(S^1,V).$ 

\begin{proof}
The proof consists of passing from decompositions and equations for formal pseudo-differential operators 
following the proofs of e.g. \cite{ER2013,ERMR,MR2016}, that are modelled after \cite{M1,M3}, to analogous 
properties 	for series of non-formal, odd class pseudo-differential operators. For this, we denote 
by $\sigma(A)$ the formal operator (or equivalently the asymptotic expansion of the symbol) corresponding to the 
operator $A.$ Because of the previous results in \cite{M1,M3}, see also \cite{ER2013,ERMR,MR2016}, 
there exist non-formal odd class operators $Y$ and $W$ defined up to smoothing operators such that 
	$$\sigma(U) = \sigma(S)^{-1}\sigma(Y)\; .$$
	Now, $\sigma(Y)$ is a formal series in $t_1, \cdots t_n,\cdots$ of symbols of differential operators, which   
	are in one-to-one correspondence with a series of (non-formal)
	differential operators. Thus, the operator $Y$ is uniquely defined, not up to a smoothing operators; it 
	depends 
	smoothly on $U$, and so does $S = Y U^{-1}.$ Finally, smoothness of the map (\ref{map}) follows
	from uniqueness and the corresponding result for symbols proven in \cite{MR2016}.
\end{proof}

\section{Extending the classical Kadomtsev-Petviashvili hierarchy to non-formal PDOs} \label{s:4}

We now introduce \emph{linearized form} of the Kadomtsev-Petviashvili hierarchy with non-formal pseudo-
differential operators, that can be understood from two viewpoints: 
\begin{itemize}
	\item first, by linearization of our \emph{delinearized} KP hierarchy (\ref{KP-delin-nonformal})
	\item second, by adapting our work carried out in \cite{MR2016}, see also the review \cite{ERMRR}, and 
	modifying the approach on 
	non-formal operators initiated in \cite{MR2018}.
\end{itemize}
 We make the following definition:

\begin{Definition}
	Let $S_0 \in Cl^{-1,*}_{odd}(S^1,V)$   be a non-formal dressing operator, and let 
	$L_0 = S_0 (\frac{d}{dx})S_0^{-1}.$ 
	We say that an operator 
	$$ L(t_1,t_2,\cdots) \in  Cl_{T,odd}(S^1,V)$$ 
	satisfies the
	(non formal) KP hierarchy if and only if 
	\begin{equation} \label{jph}
	\left\{\begin{array}{cl} 
	L(0) = & L_0 \\ 
	\displaystyle  \frac{d}{dt_n}L =& \left[(L^n)_D, L\right] \; , \quad \quad n \geq 1\; .
	\end{array}
	\right.
	\end{equation}
\end{Definition}

\smallskip

\noindent We can solve the initial value problem for (\ref{jph}):
 
\smallskip

\begin{Theorem} \label{hKP}
	Let $U(t_1,...,t_n,...) = \exp\left(\sum_{n \in N^*} t_n (L_0)^n\right) \in Cl_{T,odd}(S^1,V)$ (in fact, 
	$U$ belongs to $Cl_{T,odd}^*(S^1,V)$). Let $(S,Y)$ be the Birkhoff-Mulase decomposition of $U$ 
as in Theorem \ref{non-formal-birkhoff}.  
	Then, the operator $L \in Cl_{T,odd}(S^1,V)$ given by $L = S L_0 S^{-1} = Y L_0 Y^{-1}$ 
		is the unique solution of the hierarchy of equations 
		\begin{equation}
		\label{trueKP} \left\{\begin{array}{ccl}
	\displaystyle	\frac{d}{dt_n}L &=& \left[(L^n)_D(t_1, ...), L(t_1,...)\right] =  -\left[(L^n)_S(t_1,...), L(t_1, ...)\right]
	\; , \quad \quad n \geq 1 \; , \\
		L(0) & = & L_0 \; , \\
		\end{array}\right. .
		\end{equation}
		Moreover, 
 \begin{itemize}
		\item  Taking the formal symbols of the operator $L \in Cl_{T,odd}(S^1,V)$ given by 
		$L = S L_0 S^{-1} = Y L_0 Y^{-1}$, we recover the unique solution of the classical KP hierarchy on formal 
		operators of \cite{M1},
  \item By applying the scaling $t_n \mapsto h^nt_n$ and $\frac{d}{dx} \mapsto h \frac{d}{dx},$ we recover the 
  solution to the non-formal $h-$deformed KP hierarchy of \cite{MR2018}.
 \end{itemize}
\end{Theorem}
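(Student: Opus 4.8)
The plan is to reduce the non-formal statement to the classical factorization machinery of Mulase, exactly as it is marshalled in Theorem~\ref{non-formal-birkhoff}. First I would fix the dressing operator $S_0 \in Cl^{-1,*}_{odd}(S^1,V)$ and form $U = \exp\left(\sum_{n} t_n (L_0)^n\right)$, observing that $U \in Cl_{T,odd}^*(S^1,V)$ because each $t_n(L_0)^n$ has $T$-valuation $n$ and $L_0$ has order $1$, so the coefficient of the monomial of order $n$ lands in $Cl^n_{odd}(S^1,V)$ as required by Definition~\ref{dfs}. Theorem~\ref{non-formal-birkhoff} then supplies the unique pair $(S,Y)$ with $U = S^{-1}Y$, $Y \in Cl_{T,odd}^*(S^1,V)_D$, and $S - 1 \in Cl_{T,odd}(S^1,V)_S$; I would set $L = S L_0 S^{-1}$ and note $L = Y L_0 Y^{-1}$ follows from $Y = SU$ together with the fact that $U$ commutes with $L_0$ (both are functions of $L_0$).

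The heart of the verification is the differential equation. I would compute $\frac{d}{dt_n} L$ by differentiating $L = S L_0 S^{-1}$, which gives
\begin{equation*}
\frac{dL}{dt_n} = \left[ \frac{dS}{dt_n} S^{-1}, L \right].
\end{equation*}
The task is then to identify $\frac{dS}{dt_n} S^{-1}$ correctly inside the splitting of Proposition~\ref{SD}. Here I would exploit the two expressions $L = SL_0S^{-1} = YL_0Y^{-1}$ simultaneously. Differentiating $U = S^{-1}Y$ and using $\frac{dU}{dt_n} = (L_0)^n U$ (since $U = \exp(\sum t_k L_0^k)$), one obtains a relation between $\frac{dS}{dt_n}S^{-1}$ and $\frac{dY}{dt_n}Y^{-1}$ whose difference is $L^n = S L_0^n S^{-1} = Y L_0^n Y^{-1}$. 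Since $\frac{dY}{dt_n}Y^{-1}$ lies in the differential-operator summand $DO(S^1,V)$ (because $Y$ and its $t_n$-derivative are differential, so $Y_D$-valued) while $\frac{dS}{dt_n}S^{-1}$ lies in the smoothing/order$(-1)$ summand $Cl^{-1}_{odd}(S^1,V)$ (because $S - 1 \in Cl_{T,odd}(S^1,V)_S$), the decomposition $L^n = (L^n)_D - (L^n)_S$ of Proposition~\ref{SD} forces $\frac{dY}{dt_n}Y^{-1} = (L^n)_D$ and $\frac{dS}{dt_n}S^{-1} = -(L^n)_S$. Substituting back yields precisely
\begin{equation*}
\frac{dL}{dt_n} = \left[(L^n)_D, L\right] = -\left[(L^n)_S, L\right],
\end{equation*}
and the initial condition $L(0) = L_0$ is immediate from $U(0) = 1$, hence $S(0) = Y(0) = 1$. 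Uniqueness follows from uniqueness of the Birkhoff--Mulase factorization in Theorem~\ref{non-formal-birkhoff}: any solution produces, by integrating $\frac{dS}{dt_n}S^{-1} = -(L^n)_S$, a dressing $S$ satisfying the three defining conditions, which must coincide with the one given.

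For the two reduction statements I would argue functorially. Applying the symbol map $\sigma$ (which is an algebra morphism onto $\F Cl_{T,odd}$) to $L = S L_0 S^{-1}$ commutes with $\frac{d}{dt_n}$ and with the $D/S$ splitting, so $\sigma(L)$ solves the corresponding hierarchy on formal pseudo-differential operators; matching it with the solution of \cite{M1} is then just the observation that $\sigma(U) = \sigma(S)^{-1}\sigma(Y)$ is exactly the formal Birkhoff factorization used there, which was already established inside the proof of Theorem~\ref{non-formal-birkhoff}. For the $h$-deformation I would substitute $t_n \mapsto h^n t_n$ and $\frac{d}{dx} \mapsto h\frac{d}{dx}$ directly into $U$, $S$, $Y$, and $L$, check that the scaling is compatible with the grading (the order-$n$ monomial acquires a uniform factor $h^n$), and read off that the transformed $L$ satisfies the $h$-deformed equations of \cite{MR2018}; this is a bookkeeping verification once the scaling is seen to preserve the $Cl_{T,odd}$ structure.

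The main obstacle I expect is the clean separation of $\frac{dS}{dt_n}S^{-1}$ and $\frac{dY}{dt_n}Y^{-1}$ into the two summands of Proposition~\ref{SD}: one must be careful that $S-1$ lying in the order-$(-1)$ part is genuinely preserved under $t_n$-differentiation and logarithmic-derivative formation, and that $Y$ being a unit in the differential-operator part guarantees $\frac{dY}{dt_n}Y^{-1} \in DO(S^1,V)$ rather than merely in $Cl_{odd}$. This is where the non-formal setting could in principle differ from the purely formal one, and it is the step that genuinely relies on the structural control established in Theorem~\ref{non-formal-birkhoff} and the direct-sum decomposition of Proposition~\ref{SD}.
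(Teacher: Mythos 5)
Your proposal is essentially correct, but it follows a genuinely different route from the paper. The paper's own proof never runs the dressing computation at all: it takes the second bullet point as the \emph{engine} of the argument, observing that the scaling $t_n \mapsto h^n t_n$, $\frac{d}{dx} \mapsto h\frac{d}{dx}$ identifies the hierarchy (\ref{trueKP}) with the $h$-deformed KP hierarchy already solved in \cite{MR2018}; existence, uniqueness and smooth dependence on $S_0$ are then pulled back through the inverse scaling, and the proof closes with the observation (from \cite{MR2016,GMW2023}) that diffeological smoothness coincides with G\^ateaux smoothness on Fr\'echet manifolds. Your argument is instead the classical Mulase-style dressing proof, done intrinsically with Theorem \ref{non-formal-birkhoff}: differentiate $U = S^{-1}Y$, use $\frac{dU}{dt_n} = L_0^n U$ to get $L^n = \frac{dY}{dt_n}Y^{-1} - \frac{dS}{dt_n}S^{-1}$, and read off $(L^n)_D$ and $-(L^n)_S$ from the direct-sum decomposition. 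This buys a self-contained proof that exposes the mechanism (and the points you flag at the end are exactly the right ones; they do go through in the $T$-series setting, since $Y = 1 + (\hbox{positive valuation})$ makes $Y^{-1}$ again differential-coefficient, degree by degree), whereas the paper's proof buys brevity and, importantly, the smooth dependence on $S_0$, which your sketch does not address. Two caveats on your version: first, your sign bookkeeping wobbles (you invoke ``$L^n = (L^n)_D - (L^n)_S$'' as if it were Proposition \ref{SD}, which states a plain direct sum $L^n = (L^n)_D + (L^n)_S$; the minus-sign convention only enters through $Z = Z_D - Z_S$ in Theorem \ref{ZS0}), though the final equations come out right. Second, your uniqueness paragraph is compressed to the point of hiding real work: given an arbitrary solution $L$, one must solve the compatible system $\frac{dS}{dt_n} = -(L^n)_S S$, $S(0)=1$, show $\frac{d}{dt_n}(S^{-1}LS) = 0$ so that $L$ is conjugate to $L_0$, show $Y := SU$ satisfies $\frac{dY}{dt_n} = (L^n)_D Y$ and hence is differential, and only then invoke uniqueness of the factorization; the paper sidesteps all of this by citing the uniqueness proof of \cite{MR2018}.
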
	
\begin{proof}
We consider the $h-$deformed KP equation introduced in \cite{MR2018} via scaling. As proven therein, it has a 
unique solution which depends smoothly on $S_0.$ Now, this result is stated in the above reference for 
diffeological smoothness. By the inverse scaling $t_n \mapsto \frac{t_n}{h^n}$ and 
$\frac{d}{dx} \mapsto \frac{1}{h} \frac{d}{dx}$, we recover uniqueness and (diffeological) smoothness of the 
solution of our KP hierarchy (\ref{trueKP}). Following \cite{MR2016}, see also \cite{GMW2023}, diffeological 
smoothness in the context of Fr\'echet manifolds is equivalent to G\^ateaux smoothness; this observation ends the 
proof. 
\end{proof}

\smallskip

We finish this section with a remark on the zero curvature formulation of the KP hierarchy. This formulation is 
of course well-known in a formal setting, see for instance \cite{D,M1,M3}.


\begin{Theorem} \label{ZS1}
Let us assume that $L \in Cl_{T,odd}(S^1,V)$  solves the KP hierarchy $(\ref{trueKP})$. We consider 
$Z = \sum_k L^{\; k} dt_k, $ and we decompose this one-form as 
$Z = Z_D - Z_S$. Then, the following two equations hold:
$$ dZ_D - \left[Z_D , Z_D\right]=0 \quad \mbox{ and } \quad dZ_S - \left[Z_S , Z_S\right]=0\; .$$
\end{Theorem}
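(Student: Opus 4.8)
The plan is to observe that the statement of Theorem \ref{ZS1} is formally identical to that of Theorem \ref{ZS0}, and that its proof is the \emph{same} algebraic computation, now carried out inside the graded Fréchet algebra $Cl_{T,odd}(S^1,V)$ of Definition \ref{dfs} rather than inside $C^\infty(T, Cl_{odd}(S^1,V))$. Accordingly, I would first record that the three ingredients used in the proof of Theorem \ref{ZS0} are all available here: (i) the hierarchy (\ref{trueKP}) supplies, for every $k \ge 1$, the identity $\frac{dL}{dt_k} = [(L^k)_D, L] = -[(L^k)_S, L]$; (ii) the multiplication of $Cl_{T,odd}(S^1,V)$ is continuous and associative (Theorem \ref{fs}), so that the Leibniz rule $\frac{d}{dt_m}(L^n) = \sum_{k=0}^{n-1} L^k \frac{dL}{dt_m} L^{n-k-1}$ holds for the formal $t_m$-derivative on $T$-series; and (iii) the direct-sum decomposition of Proposition \ref{SD} extends, as already noted in this section, to a continuous splitting of $Cl_{T,odd}(S^1,V)$ into its $S$- and $D$-summands, so that the projections commute with each $\frac{d}{dt_m}$.

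With these in hand I would reproduce the computation verbatim. Applying the Leibniz rule and telescoping the commutators gives $\frac{d(L^n)}{dt_m} = [(L^m)_D, L^n] = -[(L^m)_S, L^n]$. Forming the antisymmetric combination $\frac{d(L^n)}{dt_m} - \frac{d(L^m)}{dt_n}$, expressing it once through the $D$-form and once through the $S$-form of the hierarchy, and adding the two resulting expressions, the mixed $D$--$S$ commutators cancel and one is left with
\[
\frac{d(L^n)}{dt_m} - \frac{d(L^m)}{dt_n} = [(L^m)_D, (L^n)_D] - [(L^m)_S, (L^n)_S]\; .
\]
Projecting onto the $D$- and $S$-summands (noting that $[(L^m)_D,(L^n)_D]\in DO$ while $[(L^m)_S,(L^n)_S]\in Cl^{-1}_{odd}$) yields exactly the zero-curvature identities (\ref{zcr1}) and (\ref{zcr2}) for all $n,m \ge 1$. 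Finally, writing $Z_D = \sum_k (L^k)_D\, dt_k$ and collecting the $S$-parts into $Z_S$ so that $Z = Z_D - Z_S$, and reading (\ref{zcr1}) and (\ref{zcr2}) as the $dt_m \wedge dt_n$-components (with the sign and normalization conventions fixed in Theorem \ref{ZS0} and the Remark following it, which absorb the factor $\tfrac12$), translates them into $dZ_D - [Z_D, Z_D] = 0$ and $dZ_S - [Z_S, Z_S] = 0$.

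The only genuinely new point — and the one I would check carefully — is that every manipulation respects the formal-series structure of $Cl_{T,odd}(S^1,V)$: one must verify that $L^n$, the derivatives $\frac{d}{dt_m}$, and the $D/S$-projections all remain inside this algebra, and that the rearrangement of the a priori infinite sums over monomials is legitimate in the Fréchet topology. This is precisely where the growth condition in Definition \ref{dfs} (the coefficient of a monomial of $T$-order $k$ lying in $Cl^k_{odd}$) does the work: it is exactly the condition that makes $Cl_{T,odd}(S^1,V)$ a Fréchet algebra on which the $t_m$-derivations act continuously, so that the term-by-term operations above converge and commute as required. Since the computation itself is purely algebraic and identical to the delinearized case, no further estimates are needed; the substance of the argument is this verification of well-definedness, after which the conclusion follows exactly as in Theorem \ref{ZS0}.
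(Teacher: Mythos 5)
Your proposal is correct and coincides with the paper's own argument: the published proof simply states that one mimics the proof of Theorem \ref{ZS0} with $L \in Cl_{T,odd}(S^1,V)$ in place of $L \in C^\infty(T, Cl_{odd}(S^1,V))$, which is exactly the computation you reproduce. Your additional verification that the Leibniz rule, the $D/S$-projections, and the term-by-term operations are legitimate in the Fr\'echet algebra $Cl_{T,odd}(S^1,V)$ is a careful spelling-out of what the paper leaves implicit, not a different route.
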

\begin{proof}
The proof mimicks the proof of Theorem \ref{ZS0} by replacing the pseudo-differential operator
$L \in C^\infty(T,Cl_{odd}(S^1,V))$ for $L \in Cl_{T,odd}(S^1,V).$
\end{proof}

\section{On delinearized and linearized KP-II equations} \label{s:5}

The calculations that follow are, 
formally, very similar to the ones in \cite{MRR2022}, but they differ from them (and from the ones appearing in 
any other reference, to the best of our knowledge) by the fact that the symbol
$$ 
{{ } L = \xi + \sum_{k \leq -1} u_k \xi^k  } 
$$
belongs to 
$${{ } 
C^\infty(T,\mathcal{F}Cl_{odd}(S^1,V)) = C^\infty(T,Cl_{odd}(S^1,V))/C^\infty(T,PDO^{-\infty}(S^1,V))\; , ???}
$$ 
in which the formal variable $\xi$ represents the Fourier transform of the derivation $\frac{d}{dx}$ on $S^1$.
In other words, the coefficients $u_k$ belong to $C^\infty(T,C^\infty(S^1,V))$ instead of being formal series in 
$T.$ 
 Within this setting, we can form yet another KP hierarchy, \emph{delinearized} but valued in 
 \emph{formal operators}, that reads as 
 \begin{equation}
 	\label{KP-delin-formal}
 	\left\{ \begin{array}{ccl}  L(0) & = & L_0 \\
 		\frac{dL}{dt_n} & = & \left[ L^n_D , L \right] = - \left[ L^n_S , L \right] \end{array} \right.
 \end{equation}
with $L_0 = S_0 \frac{d}{dx} S_0^{-1},$  $S_0 - Id_V \in \mathcal{F}Cl^{-1}(S^1,V)$ and 
$L \in C^\infty(T,\mathcal{F}Cl^1(S^1,V)).$ 

\smallskip

Computing as before, we deduce 
zero curvature conditions analogous to (\ref{zcr1}) in this context. Let us reduce our investigations to $V=\R,$
for illustrative purposes. The following classical calculations on formal partial symbols hold:

\smallskip

$$\begin{array}{|c|c|c|}
	\hline
	& L^2 & L^3 \\ \hline
	\sigma_3 &-&\xi^3
	\\ \hline
	\sigma_2 &\xi^2&0
	\\ \hline
	\sigma_1 &0&3 u_{-1}\xi
	\\ \hline
	\sigma_0 &2u_{-1}\xi^0& (3 u_{-2} + 3 \frac{d}{dx} u_{-1}) \xi^0 \\ \hline
\end{array}$$
Therefore,
\begin{eqnarray*}
	\left[L^2_+,L^3_+\right] & = & 	\left[\xi^2,\xi^3\right] + 3	\left[\xi^2,u_{-1}\xi\right] + 
	3 	\left[\xi^2,u_{-2}+ \frac{d}{dx} u_{-1}\right] \\
	&& + 2	\left[u_{-1},\xi^3\right]+ 6 	\left[u_{-1},u_{-1}\xi\right]  \\
	& = & (3 \frac{d^2}{dx^2} u_{-1}+ 6\frac{d}{dx} u_{-2})\xi + (-6\frac{d}{dx} u_{-1} u_{-1} + 
	\frac{d^3}{dx^3} u_{-1} + 3\frac{d^2}{dx^2}u_{-2}) \; .
\end{eqnarray*}

The ZS-equations for the pairs $(t_1,t_2),$ $(t_1,t_3)$ and $(t_2,t_3)$ read respectively as
\begin{equation}\label{t12}
	\frac{dL_+^2}{dt_1} - \frac{dL_+}{dt_2} = \left[L_+,L^2_+\right]\; ,
\end{equation}
which gives: 
\begin{equation}\label{t12-deg1}
	\frac{du_{-1}}{dt_1} = \frac{d}{dx} u_{-1}\; ;
\end{equation} 

\begin{equation}\label{t13}
	\frac{dL_+^3}{dt_1} - \frac{dL_+}{dt_3} = \left[L_+,L^3_+\right]\; ,
\end{equation}
which gives:
\begin{equation}\label{t13-deg1}
	\left\{ \begin{array}{l}\frac{du_{-1}}{dt_1} = \frac{d}{dx} u_{-1}
		\\ 
		\frac{du_{-2}}{dt_1} = \frac{d}{dx} u_{-2}  \; ; 
		\end{array} \right.
\end{equation} 

\noindent and finally for the pair $(t_2,t_3),$
\begin{equation}\label{t23}
	\frac{dL^3_+}{dt_2} - \frac{dL_+^2}{dt_3} = \left[L^2_+,L^3_+\right]\; ,
\end{equation}
which gives:
\begin{equation}\label{t23-deg1prime}
	\left\{ \begin{array}{l}\frac{du_{-1}}{dt_2} = \frac{d^2}{dx^2} u_{-1} + 2\frac{d}{dx} u_{-2}
		\\ 
		3\frac{d}{dt_2}u_{-2} + 3 \frac{d^2}{dt_2 dx} u_{-1} - 2\frac{d}{dt_3} u_{-1} = 
		-6\frac{d}{dx} u_{-1} u_{-1} + \frac{d^3}{dx^3} u_{-1} + 3\frac{d^2}{dx^2} u_{-2} \; . \end{array} \right.
\end{equation}
This system is equivalent to 
\begin{equation}\label{t23-deg1}
	\left\{ \begin{array}{l}\frac{du_{-1}}{dt_2} = \frac{d^2}{dx^2} u_{-1} + 2 \frac{d}{dx} u_{-2}
		\\ 
		3\frac{d }{dt_2}u_{-2} - 2\frac{d}{dt_3}u_{-1} = -6\frac{d}{dx} u_{-1} u_{-1} 
		-2 \frac{d^3}{dx^3} u_{-1} - 3 \frac{d^2}{dx^2} u_{-2} \; , \end{array} \right.
\end{equation}
and we stress that fact that the functions $u_{-1}$ and $u_{-2}$ are in $C^\infty(T, C^\infty(S^1,\R)).$
The results of \cite{MRR2022} imply that this instance of the KP equation {{ }can be solved by means of formal solutions $(u_{-1},u_{-2}) \in  C^\infty(S^1,\R)[[t_2,t_3]]^2$, but to the actual state of knowledge, the non-formal solutions in
$C^\infty(\R^2, C^\infty(S^1,\R))$ cannot be produced by r-matrix methods, because the Birkhoff-Mulase factorization does not apply to this case.}

\medskip

Let us summarize our work. We have obtained a ``commutative diagram of hierarchies''
that induces, via a reduction to the plane generated by $(t_2,t_3)$ variables, the same KP-II equation with its solutions either in a formal formulation,
or in non-formal formulation.

\medskip

\begin{tikzcd}
	(\ref{KP-delin-nonformal}) \Leftrightarrow \hbox{YM formulation} \arrow [d,dotted] \arrow [r]&(\ref{KP-delin-formal}) \arrow [d,dotted] & \hbox{non-formal solution of KP-II} \arrow [d,dotted]\\
	(\ref{jph}) \hbox{ with non formal operators} \arrow [r] & \hbox{classical KP hierarchy} & \hbox{formal solutions to KP-II}
\end{tikzcd}

\medskip

In this diagram, the dotted arrow means taking the infinite jets, passing from non-linearized to linearized 
objects, and the plain arrows mean taking the formal parts of operators. Each line produce one type of solution, 
the first one solutions in $C^\infty(\R^2 \times S^1, \R),$ and the second one solutions in 
$C^\infty(S^1,\R)[t_2,t_3].$ 

\medskip

\paragraph{\bf Acknowledgements:} J.-P.M was partially
supported by ANID (Chile) via the FONDECYT grant \#1201894 during the first stages of this work.
He also thanks the France 2030 framework programme Centre Henri Lebesgue ANR-11-LABX-0020-01 
for creating an attractive mathematical environment.
E.G.R.'s research was partially supported by the FONDECYT grants  \#1201894 and \#1241719.

\end{document}